\tikzset{%
  Line Width/.code={%
    \pgfpointxy{#1}{0}%
    \pgfgetlastxy\tmpx\tmpy\tikzset{line width/.expanded=\tmpx}%
  },
  person/.style={
    Line Width=1/4, draw=black, color=gray!50
  },
  highlight/.style={
    preaction={Line Width=1/4, draw=gray!50, path fading=west, fading angle=-45},
    Line Width=1/4, draw=white, , path fading=east, fading angle=-45,
}}
\tikzset{>=latex}
\newcommand{\clock}{v}
\newcommand{\timestamp}{\tau}
\newcommand{\ext}{\textsf{EXT}}
\newcommand{\dcn}{\textsf{DCN}}
\newcommand{\inputt}{\textsf{in}}
\newcommand{\outputt}{\textsf{out}}
\newcommand{\startt}{\textsf{start}}
\newcommand{\init}{\textsf{init}}
\newcommand{\approxagreement}{\textsf{AA}}
\newcommand{\byzantineagreement}{\textsf{aBA}}
\newcommand{\timestampagreement}{\textsf{TA}} 
\DeclarePairedDelimiter\abs{\big\lvert}{\big\rvert}
\newcommand{\tx}{\mathit{tx}}
\newcommand{\tsig}{\mathit{tsig}}
\newcommand{\nonce}{\mathit{nonce}}
\newcommand{\hash}{\mathit{hash}}
\DeclarePairedDelimiter\ceil{\lceil}{\rceil}
\def\BibTeX{{\rm B\kern-.05em{\sc i\kern-.025em b}\kern-.08em
    T\kern-.1667em\lower.7ex\hbox{E}\kern-.125emX}}
\title{A Fair and Resilient Decentralized Clock Network for Transaction Ordering}
\author{Andrei Constantinescu}{ETH Z{\"u}rich, Switzerland}{aconstantine@ethz.ch}{https://orcid.org/0009-0005-1708-9376}{}
\author{Diana Ghinea}{ETH Z{\"u}rich, Switzerland}{ghinead@ethz.ch}{https://orcid.org/0000-0002-5294-9459}{}
\author{Lioba Heimbach}{ETH Z{\"u}rich, Switzerland}{hlioba@ethz.ch}{https://orcid.org/0000-0002-8258-1712}{}
\author{Zilin Wang}{ETH Z{\"u}rich, Switzerland}{ziwang@ethz.ch}{}{}
\author{Roger Wattenhofer}{ETH Z{\"u}rich, Switzerland}{wattenhofer@ethz.ch}{https://orcid.org/0000-0002-6339-3134}{}
\authorrunning{A. Constantinescu, D. Ghinea, L. Heimbach, Z. Wang, and R. Wattenhofer} 
\keywords{Median Validity, Blockchain, Fair Ordering, Front-running Prevention, Miner Extractable Value}
\begin{document}

\maketitle 
\begin{abstract}
Traditional blockchain design gives miners or validators full control over transaction ordering, i.e.,~they can freely choose which transactions to include or exclude, as well as in which order. While not an issue initially, the emergence of decentralized finance has introduced new transaction order dependencies allowing parties in control of the ordering to make a profit by front-running others' transactions.
In this work, we present the \textit{Decentralized Clock Network}, a new approach for achieving fair transaction ordering. 
Users submit their transactions to the network's clocks, which run an agreement protocol that provides each transaction with a timestamp of receipt which is then used to define the transactions' order.
By separating agreement from ordering, our protocol is efficient and has a simpler design compared to other available solutions. Moreover, our protocol brings to the blockchain world the paradigm of asynchronous fallback, where the algorithm operates with stronger fairness guarantees during periods of synchronous use, switching to an asynchronous mode only during times of increased network delay. 
\end{abstract}

\section{Introduction}

The first blockchain, a decentralized distributed digital ledger that records transactions across a network of computers, was introduced in 2008 with Bitcoin by Nakamoto~\cite{nakamoto2008bitcoin}. Blockchains offer a novel way of storing and transferring value in a trustless and secure manner, without the need for intermediaries. Despite their popularity, blockchain adoption was slow, as blockchains were, initially, mainly used to facilitate simple transfers of money between two individuals. However, this changed in 2015 with the introduction of smart contracts on Ethereum~\cite{wood2014ethereum}, allowing for complex digital agreements to be carried out on-chain. Nowadays, smart contracts are the backbone of a rapidly-growing complex ecosystem of decentralized financial applications known as \emph{decentralized finance (DeFi)}. DeFi offers most traditional financial services, including decentralized exchanges, lending protocols, and stablecoins, without relying on financial intermediaries.

The smart contracts that govern DeFi are generally dependent on the transaction order. That is, the outcome of executing a set of transactions depends on their order. As most transactions were simple transfers in the early days, the original blockchain design did not need to pay much attention to transaction ordering. Instead, the power of transaction ordering is concentrated in miners or validators, which can freely choose which transactions to include and how to order them inside each block. Nowadays, block proposers (miners) extract profit from appropriately ordering, including, and excluding transactions during block production. This profit is known as miner (or maximal) extractable value (MEV). MEV accounts for a profit of at least US\$ 650M ~\cite{mevexplore} so far. In fact, Flashbots and other transaction relay protocols organized a whole market around ordering transactions.

\paragraph*{Front-running Attacks}
Most MEV relies on the ability of the attacker to \emph{front-run} the victim's transaction $\tx.$ To be specific, the attacker observes a newly generated victim transaction $\tx$ in the mempool (the public waiting area for transactions). The attacker then introduces their own transaction $\tx'.$ If $\tx'$ executes before $\tx$ (front-running), the attacker profits at the expense of the victim. So, the attacker may simply bribe the block proposer with a high fee to execute $\tx'$ first, even though $\tx'$ was only created once $\tx$ was already publicly known.

Front-running can be broadly categorized into two types~\cite{qin2022quantifying}:~\emph{tolerant} and \emph{destructive}. Tolerant front-running involves the attacker placing their own transaction before the victim's transaction in the order of execution. This allows the attacker to gain an advantage, such as purchasing an asset at a lower price before the victim can. Such attacks are often seen on decentralized exchanges, where the attacker executes a trade before the victim, reaping the benefits of price changes. Destructive front-running, on the other hand, has the attacker taking out the victim's transaction altogether. Generally, the attacker copies the victim's presumably profitable transaction. If the attacker's transaction executes first, the victim's transaction would no longer execute, at least not as intended.

\paragraph*{Our Contribution} 
We propose the Decentralized Clock Network (DCN), a novel solution for achieving fair transaction ordering. More concretely, our system ensures that, if a transaction $\tx$ was sent to the system long enough before transaction $\tx',$ then $\tx'$ cannot be ordered before $\tx,$ i.e., preventing tolerant front-running. 
In contrast to most previous solutions relying on the blockchain consensus algorithm to determine a relative ordering of the transactions, our approach employs a decentralized network of $n$ nodes, equipped with clocks, resilient to $f < n / 3$ 
byzantine failures to agree on a timestamp for each transaction. These timestamps are subsequently used to determine the order of the transactions inside each block and across blocks. Decoupling timestamping from ordering enables lower latency bounds whilst reducing the complexity of the consensus mechanism.

A blockchain system is synchronous if all messages arrive at the receiver within a known time-bound, and the nodes involved have local clocks that are (almost) perfectly synchronized.
However, in times of turmoil, such as when participants are under attack, messages might experience longer delays, or clocks may no longer be aligned with real-time. Such failures are modeled by the asynchronous model. An important novelty in our work is that our protocol is designed to provide guarantees regardless of the network conditions, without knowing in advance which setup to expect. It is designed for the asynchronous model, however, if the network happens to satisfy some synchrony assumptions, which is often the case in real-world networks, it provides stronger guarantees reflecting in the order obtained. To quantify this effect, we propose a new notion of order fairness, called $\delta$-Median Fairness. Roughly, transactions shall be ordered based on a value that is close to the median of the points in time when honest nodes in the DCN first learn about the transaction. Here, $\delta$ is an error parameter, determining the closeness of the estimated median to the true median of the honest timestamps in terms of quantiles.
This definition is a stronger version of Honest-Range Fairness (or fair separability, as defined in \cite{zhang2020byzantine}). When operating under asynchronous conditions, our algorithm achieves $f$-Median Fairness, which coincides with Honest-Range Fairness in the worst case $n = 3f + 1,$ but is stronger otherwise. On the other hand, when the network is synchronous for a sufficient amount of time, our algorithm achieves the superior guarantee of $\ceil*{f / 2}$-Median Fairness. In both cases, these guarantees are optimal.
We add that our protocol sidesteps the attack where relative orders relying on the median can be manipulated by a single byzantine node presented in~\cite{kelkar2020order} by ensuring that (1) nodes always agree on some honest timestamp, and (2) with the help of cryptographic primitives, we do not allow nodes, or anyone else, to see the transaction contents before a timestamp is agreed upon.

\paragraph*{Related Work}

\noindent \textbf{Fair Ordering.}
Blockchain front-running prevention techniques have been the subject of significant research in recent years. We point the reader to Baum et al.~\cite{baum2021sok} and Heimbach et al.~\cite{Heimbach2022sok} for an overview of these approaches and only discuss the most relevant in the following.

Flashbots~\cite{FlashBots} and other private relay services, in which transactions are sent directly to a trusted third party for ordering and subsequent forwarding to validators for block inclusion, are widely adopted. While this approach is efficient, it centralizes the transaction ordering process, i.e., introduces a single point of failure, and is often used to front-run as opposed to protect against. In contrast, our approach distributes the transaction ordering responsibility.

In the field of fair transaction ordering, committee-based approaches have been widely studied. Generally, these approaches can be divided into two categories:~those that can operate in asynchrony and those that assume partial synchrony, which is a model weaker than synchrony and stronger than asynchrony. To tackle fair ordering in partial synchrony, Pompe is proposed by Zhang et al.~\cite{zhang2020byzantine}, Wendy is proposed by Kursawe~\cite{kursawe2020wendy} and Themis is proposed by Kelkar et al.~\cite{kelkar2021themis}. As opposed to these protocols, the DCN we propose is equipped to handle asynchrony. In particular, Pompe and Themis rely on (partial) synchrony and Wendy assumes the clocks of the nodes are always synchronized. 

Kelkar et al.~\cite{kelkar2020order} introduce Aequitas, which achieves state-of-the-art fairness properties, but has a significant communication complexity of $\mathcal{O}(n^4)$ in asynchrony. Our agreement protocol achieves in expectation $\mathcal{O}(n^3 \log \Delta)$ message complexity in asynchrony, where $\Delta$ denotes the \emph{observed} network delay. We note that this delay does not have to be known a priori, as opposed to classical synchronous protocols.

Quick order fairness, introduced by Cachin et al.~\cite{cachin2022quick} achieves $\mathcal{O}(n^3)$ message complexity in asynchrony. While their protocol allows for a node to gain insider information before an ordering is agreed upon, our protocol adds further protection to users as the committee only sees the full transaction after the timestamp is agreed upon. Further, their approach, and the others, only target agreement amongst the permissioned committee, while our design extends to implementing the fair ordering on a permissionless blockchain after agreement has been reached in the permissioned committee.

\noindent \textbf{Agreement Protocols.}
Achieving agreement on a value subject to some Validity condition, i.e., Byzantine Agreement (BA) \cite{byzgenerals}, is an extensively studied problem in Distributed Computing. In real-world applications, hence also in our setting, it is desirable to expect the Validity condition to carry some meaning, while the classical BA definition only ensures that if honest nodes have the same input value $v,$ they all output $v.$ If this pre-agreement condition is not met, the honest nodes may output an adversarially chosen value. Recent works have focused on achieving more meaningful guarantees, such as ensuring that the honest output is \emph{close} to the honest inputs' median \cite{stolz2016byzantine}, to the $k$-th lowest honest input \cite{melnyk2018byzantine}, or somewhere in the range of honest inputs \cite{vaidya2013byzantine}. These works, however, only focus on the synchronous model. That is, they assume perfectly synchronized clocks and a publicly available upper bound on the network delay. A more realistic setting is the so-called asynchronous model, which drops this assumption, but showcases important limitations:~in the asynchronous setting, BA cannot be achieved deterministically \cite{FLP}.
There is still hope, however:~randomized asynchronous BA protocols exist
\cite{mostefaoui2015signature,rabin1993randomized,berman1993randomized,friedman2005simple,bracha1987asynchronous,toueg1984randomized,canetti93fast,cachin2000random}; however, without meaningful Validity guarantees if the input space contains more than two values. Another relaxed variant of BA is Approximate Agreement (AA) \cite{JACM:DLPSW86, OPODIS:AAD04}, which offers deterministic protocols that enable honest nodes to output values within the range of their inputs, with the caveat of weakening the Agreement guarantees:~honest outputs are $\varepsilon$-close for any predefined $\varepsilon > 0.$

To implement our fair-ordering definition, we propose an asynchronous (randomized) BA protocol with optimal resilience, that achieves Median Validity \cite{stolz_byzantine_nodate, melnyk2018byzantine} with optimal-error guarantees, assuming that the inputs are integers. Our lower bound on this error implies that, when the network is asynchronous, and when aiming for optimal resilience, the best one can hope for is obtaining outputs within the range of the honest inputs. We circumvent this problem by designing a protocol
\textcolor{black}{ whose Validity guarantees scale with the network conditions: if the synchrony assumptions are satisfied for a sufficient amount of time, our protocol will enable honest nodes to agree on a value satisfying the synchronous model's optimal guarantees on Median Validity. Otherwise, our protocol will at least provide Median Validity with optimal guarantees for the asynchronous model, hence the output agreed upon will be within the range of honest values.} Designing protocols that achieve simultaneously optimal guarantees in both synchronous and asynchronous networks, has been a topic that attracted increased attention in the recent years in the Distributed Computing literature. There has been a line of works focusing on problems such as Byzantine Agreement~\cite{TCC:BluKatLos19}, Approximate Agreement~\cite{PODC:GhLiWa22}, State Machine Replication~\cite{AC:BluKatLos21}, and also Multi-Party Computation~\cite{Crypto:BLL20,TCC:DHL21,PODC:ApChCh22}.

\section{The Decentralized Clock Network}

In this section, we describe the DCN, which consists of a network of nodes equipped with {synchronized} clocks operating with the objective of providing an accurate and decentralized timestamping service to blockchain transactions. The resulting timestamps are used to determine the ordering of the transactions inside each block, as well as across blocks. The intuition behind using a timestamping service is that, instead of relying on consensus to determine the ordering directly, like in FSS from ChainLink Labs~\cite{FSS}, this way the order of the transactions is naturally induced by the timestamps, allowing the complexity of the agreement protocol to be reduced.

\paragraph*{High-Level Design}
To enable DCN support for ordering transactions on an existing blockchain, the blockchain requires only minor adaptations. In particular, with every submitted transaction, an additional timestamp computed by the DCN is expected. Validators should check for each block whether timestamps are authentic and whether the ordering induced by the timestamps is respected, rejecting the block otherwise. In order for this check to take place, timestamps computed by the DCN are accompanied by threshold signatures, cryptographic gadgets used to prove that each timestamp was agreed upon by at least one honest node. Nodes in the DCN must not only be trustworthy, but also have good network conditions and be able to handle a large volume of service requests. To ensure the precision and consistency of the nodes' clocks, as well as nodes' high availability, we implement the DCN as a permissioned system, where the identity and public keys of the nodes are known to the validators. Nodes are not intended to change frequently and, by keeping the set of nodes in the system fixed, we can ensure that the nodes are reliable and that the timestamping service is accurate. 

\paragraph*{Network Model and Assumptions}
The DCN consists of $n$ nodes in a fully-connected network, such that any two nodes in the network can communicate through authenticated channels. Nodes can moreover receive external inputs, e.g., transactions from users. Each node comes equipped with a clock. We assume that node clocks are periodically realigned with real-time, which can be achieved through the use of a common external reference, such as UTC time or GPS time.

We consider an adaptive adversary that takes control during the protocol's execution of at most $f < n / 3$ nodes, causing them to deviate arbitrarily (even maliciously) from the protocol; i.e., byzantine behavior. 

We assume an estimation $\Delta_{\dcn}$ representing an upper bound on the network delay within the DCN, i.e., messages sent between the nodes \emph{should} be delivered within $\Delta_{\dcn}$ time. Similarly, we assume an estimation $\Delta_{\ext}$ for the upper bound on the external network delay, i.e., for messages sent between users and the nodes in the DCN.
We say that the network is synchronous if the message delays are \emph{always} at most $\Delta_{\dcn}$ and $\Delta_{\ext},$ and the nodes' clocks are perfectly synchronized.
If any of these conditions fail at any point, then the network is asynchronous.  
In our work, we will assume that the network is asynchronous. However, we take into account that an asynchronous assumption is often too pessimistic to model a real-life network. Hence, we aim to offer stronger guarantees during timespans when the network is synchronous, which should be the case most of the time if our estimations $\Delta_{\dcn}$ and $\Delta_{\ext}$ are faithful.
We also take into account that real clocks may fail the perfect synchrony assumption; i.e., they may have a small skew $S,$ or their local rate may vary by a factor $\theta = 1 + o(1),$ as described in~\cite{Lenzen22}. However, we assume perfect synchronization for simplicity of presentation, and we will briefly describe how our protocols can be modified to achieve the same synchronous guarantees under the weaker clock synchronization assumptions.

\paragraph*{Cryptographic Primitives}
As mentioned previously, we employ threshold signatures. In an $(\ell, n)$-threshold signature scheme, a public key is known to the $n$ nodes and also to all users and validators. Moreover, each node $\clock$ knows a unique private key that enables the generation of a partial signature $\sigma_{\clock}(m)$ for any message $m.$ The defining property of the scheme is that $\ell$ partial signatures from distinct nodes for the same message $m$ can be combined into a single signature $\sigma(m)$ that can be verified using the public key. Formally, the scheme should satisfy robustness and non-forgeability (see the full version of \cite[{Section 2.3.2}]{threshold_sig_def} for the definitions). For our purposes, we set the threshold $\ell=f+1$ and choose the BLS
scheme~\cite{boneh2001short,shoup2000practical}. 

Furthermore, we require a secret sharing scheme. In a $(k, n)$-secret sharing scheme, a secret, such as a user transaction, is divided into $n$ so-called \emph{shares}, one known to each node, such that any $k$ nodes can reconstruct the secret, while any coalition of at most $k-1$ nodes cannot learn anything about it. Formally, the information-theoretic requirement is that any $k$ shares uniquely determine the secret, while any $k - 1$ shares must be independent of the secret. Informally, given $k - 1$ shares, every possible transaction is equally likely to result in these shares. In our work, we choose $k=f + 1$ and use the Shamir scheme~\cite{shamir_how_1979}.

\paragraph*{The Transaction Submission Protocol}
\begin{figure*}[t]
  \centering
  \begin{tikzpicture}[people/.style={minimum width=1.2cm},scale =0.8]
\node[]  at (0,-1.3) (alice) {User};
\node[] at (5,-1.3)  (vs) {DCN};
\node[] at (10,-1.3)  (eve) {Blockchain P2P Network};

\node[inner sep=0pt] (whitehead) at (0,0)  (alice1)  {\includegraphics[width=1.4cm]{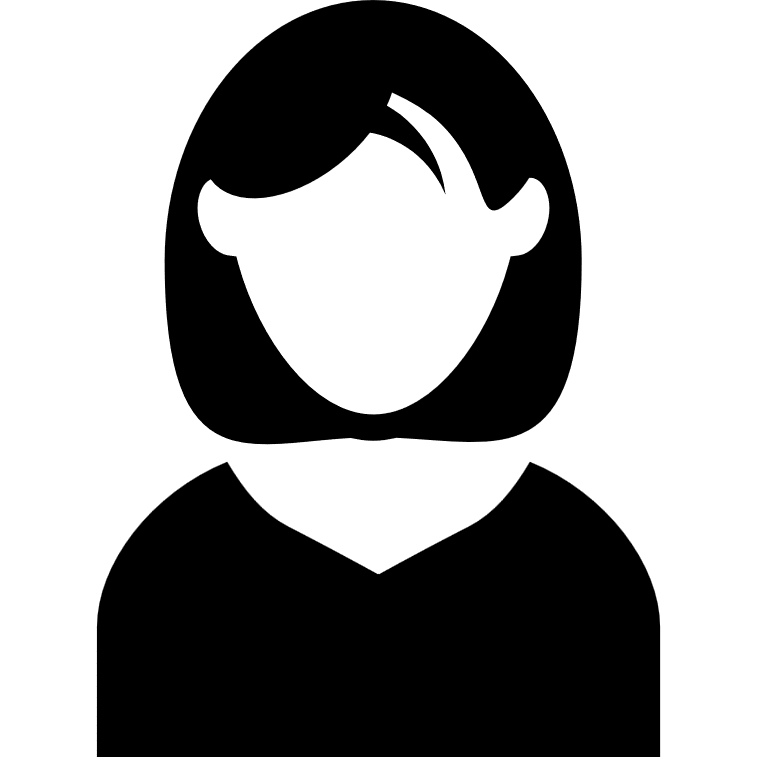}};
\node[inner sep=0pt] (whitehead) at (5,0)  (vs1)  {\includegraphics[width=1.4cm]{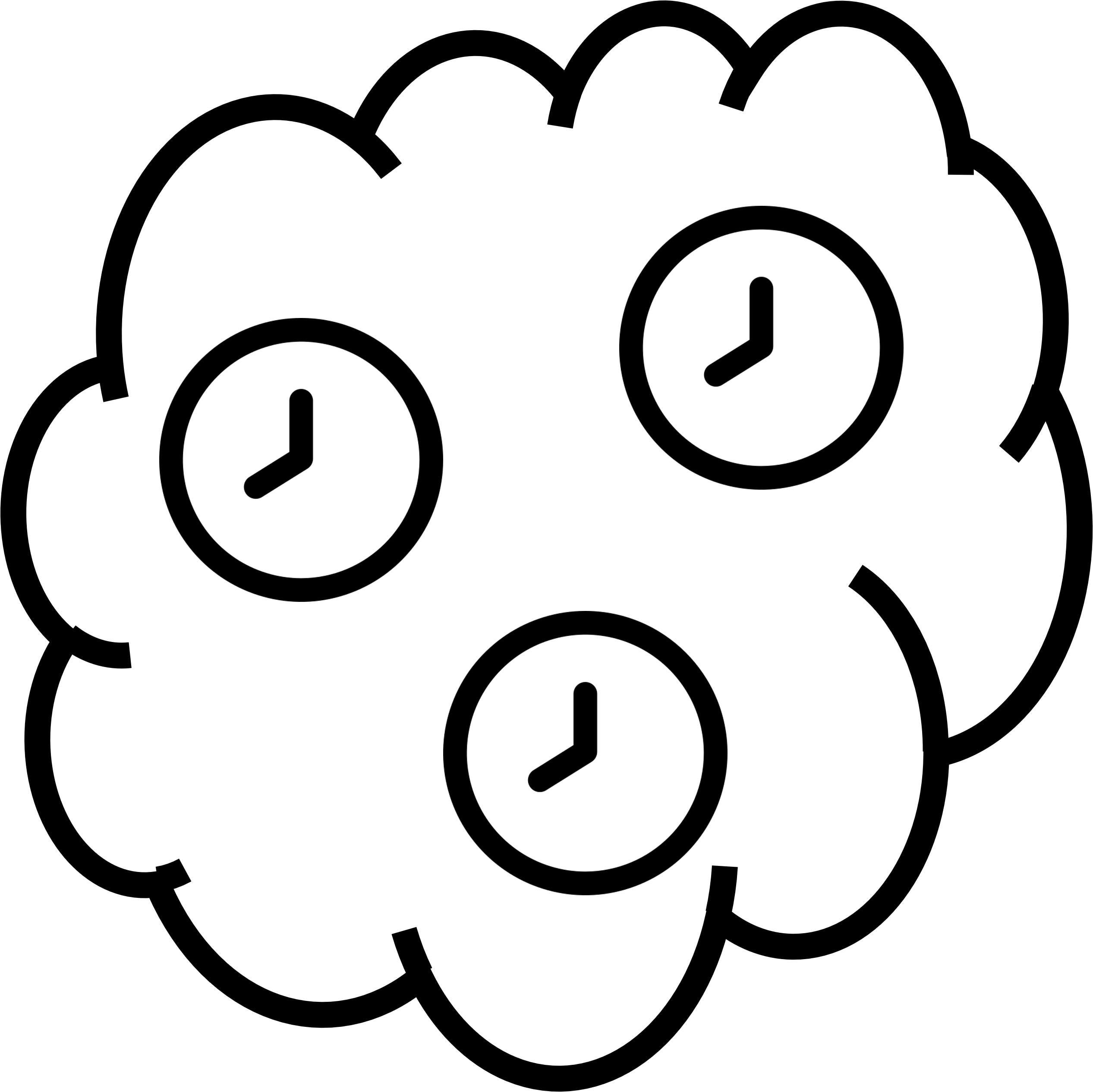}};
\node[inner sep=0pt] (whitehead) at (10,0)  (eve1)  {\includegraphics[width=1.4cm]{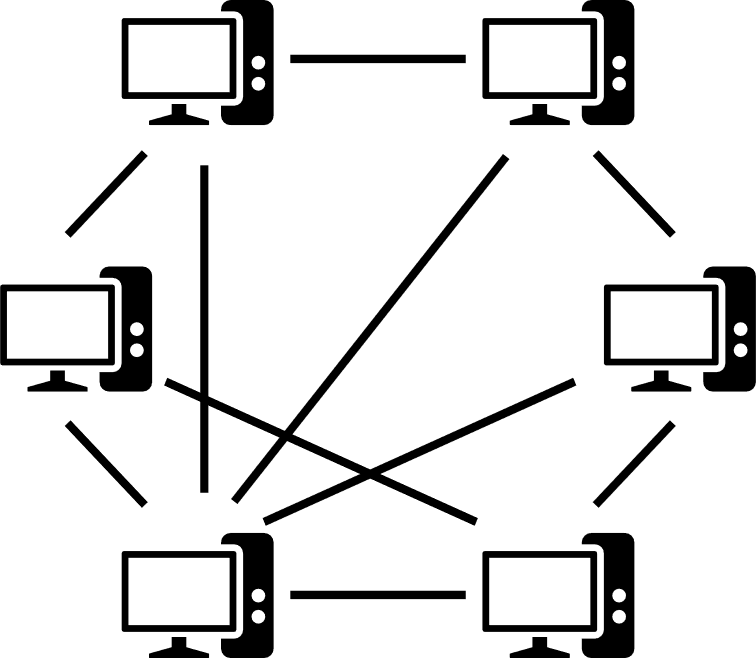}};

\node[] at (-3,-2.4) (a) {(a)};
\node[] at (0,-2.4) (a1) {\begin{tabular}{c}Compute $(\tx_1, \ldots, \tx_n).$ \\Compute $(\tsig_1, \ldots, \tsig_n).$ \\Compute $h = \hash(\tx, \nonce).$\end{tabular}};

\draw[->] ([yshift=-2.3cm]alice.south) coordinate (l1)--(l1-|vs) node[midway, above]{\begin{tabular}{c} $(h, \tx_\clock, \tsig_\clock)$\end{tabular}};

\node[] at (-3,-4.7) (b) {(b)};
\node[] at (5,-4.7) (b1) {\begin{tabular}{c} Node $\clock$ receives $(h, \tx_\clock, \tsig_\clock)$ at $\timestamp_\clock.$\\ $\timestamp := \text{\textcolor{black}{Timestamp}Agreement}(\timestamp_\clock).$ \end{tabular}};

\node[] at (-3,-5.9) (c) {(c)};
\node[] at (5,-5.9) (c1) {\begin{tabular}{c} Node $\clock$ signs $(h, \timestamp)$ and broadcasts to network.\\ Compute threshold signature $\sigma_\clock.$\end{tabular}}; 

\node[] at (-3,-7.1) (d) {(d)};
\node[] at (5,-7.1) (d1) {\begin{tabular}{c} Node $\clock$ broadcasts $(\tx_\clock, \tsig_\clock).$\\ Recover $(\tx, \nonce)$ and check $\hash(\tx, \nonce) = h.$\end{tabular}};

\node[] at (-3,-8.2) (e) {(e)};
\draw[->] ([yshift=-6.9cm]vs.south) coordinate (l1)--(l1-|eve) node[midway, above]{\begin{tabular}{c} $ (\tx, \timestamp, \sigma_v)$\end{tabular}};

\vspace{3mm}
\end{tikzpicture}
  \caption{Illustration of the transaction submission (i.e., main) protocol.}
  \label{fig:flow}\vspace{-8pt}

\end{figure*}

In this section, we formally present the protocol used when users submit transactions to the blockchain (cf.~Figure~\ref{fig:flow}), which we also refer to as the \emph{main} protocol. Assume a user wants to submit transaction $\tx,$ then the following steps are to be followed: 

\begin{enumerate}[label=(\alph*)]
    \item \label{step_1} The user generates a random nonce $\nonce.$ Then, the user splits the pair $(\tx, \nonce)$ into $n$ shares $(\tx_1, \ldots, \tx_n)$ using the $(f+1, n)$-secret sharing scheme and signs the shares with their private key to get $(\tsig_1, \ldots, \tsig_n).$ Subsequently, the user hashes the transaction together with the nonce as $h = \hash(\tx, \nonce).$ Finally, they send to each node $\clock$ the tuple $(h, \tx_\clock, \tsig_\clock).$
    \item \label{step_2} Each node $\clock$ receives $(h, \tx_\clock, \tsig_\clock)$ at some time $\timestamp_\clock.$ Together, the nodes run a \emph{Timestamp Agreement} protocol to agree on a common timestamp $\timestamp$ for transaction $\tx.$ The agreement protocol is described in detail in Section \ref{sec:agreement}.
    \item \label{step_3} Upon reaching agreement, each node signs $(h, \timestamp)$ and broadcasts the signature to the other nodes. Each node $\clock$ receives the signatures of $(h, \timestamp),$ verifies them, and uses the at least $f + 1$ valid ones to compute a threshold signature $\sigma_\clock$ for the pair $(h, \tau).$ 
    \item \label{step_4} Afterwards, each node $\clock$ broadcasts their signed share $(\tx_\clock, \tsig_\clock)$ to all other nodes. Each node receives the signed shares, verifies the signatures, and uses the at least $f + 1$ valid shares to recover the pair $(\tx, \nonce).$ Finally the node checks whether $\hash(\tx, \nonce) = h,$ aborting the protocol otherwise.
    \item \label{step_5} Each node $\clock$ now knows $\tx$ and submits it timestamped to the blockchain's peer-to-peer (P2P) network as the tuple $(\tx, \timestamp, \sigma_v).$
\end{enumerate}

The blockchain now operates with tuples of the form $(\tx, \timestamp, \sigma)$ instead of just transactions $\tx.$ For each transaction in a block, validators check the threshold signature $\sigma$ using the public keys of the nodes. Moreover, they also check that transactions are ordered in non-decreasing order by $\timestamp$ inside the block and that the lowest timestamp in the block is no lower than the highest timestamp in the previous block. 

We now provide additional intuition for the submission protocol and reasoning behind some of the design considerations. Step \ref{step_1} describes the user-sided part, while steps \ref{step_2}--\ref{step_5} describe the DCN-sided part.

In step~\ref{step_1} the user hashes $\tx$ together with a random nonce and sends it to the nodes. The nonce is required to prevent malicious actors from inferring information about $\tx$ based on past transaction data; e.g.~if a user submits similar transactions periodically, they can be identified by their hash and front-run, e.g., buying ETH every time they receive their paycheck.  
Moreover, the transaction-nonce pair is split into $n$ shares which are distributed to the $n$ nodes. This allows the DCN to recover the pair $(\tx, \nonce)$ after agreeing on timestamp $\timestamp,$ check its integrity against the hash $h,$ and submit $\tx$ to the blockchain on the user's behalf, preventing users from submitting many timestamping requests without submitting matching transactions to the blockchain, which would be the source of attacks.

In step~\ref{step_2} the DCN agrees on a common timestamp $\timestamp$ for transaction $\tx$ using the agreement protocol described later on in Section~\ref{sec:agreement}, which is efficient, robust to at most $f < n/3$ byzantine failures in both synchronous and asynchronous settings, and achieves good fairness guarantees, whose definitions we postpone to the next section.

In step~\ref{step_3} the DCN computes threshold signatures for the pair $(h, \timestamp)$ consisting of the transaction hash together with timestamp $\timestamp.$ Any valid such signature can be used to prove that at least $f + 1$ nodes have agreed on it; i.e., at least one honest node.

In step~\ref{step_4} the nodes circulate their shares to recover the pair $(\tx, \nonce).$ Note that this has to be done after agreeing on the timestamp because otherwise a byzantine node could front-run $\tx$ by submitting its own transaction and having agreement happen for it faster than for $\tx.$ Moreover, checking the hash of the pair against $h$ is required to prevent dishonest users from sending contradicting shares. Note that steps \ref{step_3} and \ref{step_4} can be implemented concurrently, but we chose not to do so for simplicity of exposition.

Finally, in step \ref{step_5} each node $\clock$ submits $\tx$ together with timestamp $\timestamp$ and threshold signature $\sigma_v$ to the blockchain, which will handle checking the signatures and ensuring that transactions are ordered by timestamp inside each block and across blocks. Note that different nodes might compute different threshold signatures $\sigma_v,$ even in the presence of no byzantine nodes, because of the choice of which individual signatures to include in $\sigma_v,$ but any valid such signature is enough to certify the tuple $(\tx, \timestamp, \sigma_v).$ We further note that validators will of course check that the transaction $\tx$ is only executed once.

We state our protocol's guarantees in the theorem below, which we prove in Section \ref{section:main-protocol-proofs}. We provide a formal definition for the term \emph{fair timestamp} in Section \ref{section:fairness-stuff}.
\begin{theorem}\label{theorem:main-protocol}
    The transaction submission protocol achieves the following properties:
    \begin{itemize}
        \item (Honest-User Liveness) If a transaction is sent by an honest user, it gets processed and submitted to the mempool eventually. Moreover, if the honest user's messages reach the nodes within $\Delta_{\ext}$ time and the synchrony assumptions hold inside the DCN for an additional $\Delta_{\dcn}$ time, the transaction gets submitted within expected $\mathcal{O}(\log \Delta_{\ext})$ communication rounds.
        \item (Integrity) If a transaction gets submitted to the mempool, the process was initiated by some user.
        \item (Unique Timestamp) If a transaction gets submitted to the mempool by two nodes, with timestamps $\timestamp$ and $\timestamp',$ then $\timestamp = \timestamp'.$
        \item (Fair Timestamp) If a transaction gets submitted to the mempool with timestamp $\timestamp,$ then $\timestamp$ is a \emph{fair} timestamp.
    \end{itemize}
\end{theorem}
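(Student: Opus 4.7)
}

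The plan is to verify each of the four properties in turn, leaning on the termination, agreement, and validity guarantees of the Timestamp Agreement subprotocol (to be proved in Section~\ref{sec:agreement}) together with the standard robustness/non-forgeability of the BLS threshold signature scheme and the reconstruction guarantees of Shamir secret sharing.

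First, for \emph{Honest-User Liveness}, I would argue by walking through steps \ref{step_1}--\ref{step_5}. An honest user sends valid shares $(h,\tx_\clock,\tsig_\clock)$ to every node, so at least $n-f \geq 2f+1$ honest nodes eventually receive their share and enter the Timestamp Agreement protocol with a well-defined local arrival time $\timestamp_\clock$. Termination of Timestamp Agreement (to be established in Section~\ref{sec:agreement}) then yields a common $\timestamp$ at every honest node; each honest node signs $(h,\timestamp)$, so every honest node collects at least $f+1$ valid partial signatures and computes a threshold signature $\sigma_\clock$. The hash $h = \hash(\tx,\nonce)$ commits the honest user to a unique pair, so in step~\ref{step_4} the $f+1$ honest shares received by any honest node reconstruct $(\tx,\nonce)$ and pass the integrity check, after which $(\tx,\timestamp,\sigma_\clock)$ is submitted to the P2P network. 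The round-count bound follows by composing the $\Delta_{\ext}$ delivery time of step~\ref{step_1}, the $\mathcal{O}(\log \Delta_{\ext})$ expected round complexity of Timestamp Agreement on inputs within an interval of width $\mathcal{O}(\Delta_{\ext})$ (inherited from the agreement protocol to be designed), and the single additional synchronous round required for steps~\ref{step_3}--\ref{step_4}.

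For \emph{Integrity}, I would use non-forgeability of the BLS $(f+1,n)$-threshold scheme: any tuple $(\tx,\timestamp,\sigma)$ accepted by validators carries a valid threshold signature on $(h,\timestamp)$, which can only have been produced from partial signatures of at least $f+1$ distinct nodes, i.e.~at least one honest node. An honest node only signs $(h,\timestamp)$ after receiving a user-signed share matching some hash $h$ in step~\ref{step_1}, so the process was indeed initiated by a user. \emph{Unique Timestamp} is a direct consequence of the Agreement property of Timestamp Agreement, instantiated per transaction (keyed by $h$): any two honest nodes that complete step~\ref{step_2} for the same $h$ output the same $\timestamp$, and any value that a byzantine node can convince the blockchain of must also carry a threshold signature, hence was signed by an honest node, hence equals the common honest output. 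Finally, \emph{Fair Timestamp} reduces to the Validity guarantee of Timestamp Agreement: by construction the protocol's output satisfies $\delta$-Median Fairness relative to the honest arrival times $\timestamp_\clock$ (the formal definition deferred to Section~\ref{section:fairness-stuff}), which is exactly the fairness notion the theorem asserts.

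The main obstacle is the Honest-User Liveness claim, and specifically the round-complexity bound, since it requires the Timestamp Agreement protocol to (i) always terminate in asynchrony and (ii) terminate in $\mathcal{O}(\log \Delta_{\ext})$ rounds whenever the honest inputs lie in an interval of size $\mathcal{O}(\Delta_{\ext})$ and the DCN is synchronous for that duration. Establishing these two properties is the heart of Section~\ref{sec:agreement}; the other three properties of the theorem then drop out of agreement, signature non-forgeability, and the validity guarantees of Timestamp Agreement in a largely mechanical way.
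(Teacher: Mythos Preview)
Your proposal is correct and follows essentially the same approach as the paper: both decompose the theorem into the four properties and derive each one from the corresponding guarantee of $\Pi_{\timestampagreement}$ (Termination for Liveness, Agreement for Unique Timestamp, $\delta$-Median Validity for Fair Timestamp) together with non-forgeability of the $(f+1,n)$-threshold signature to reduce any accepted mempool tuple to something signed by at least one honest node. The only minor differences are cosmetic: the paper phrases the extra rounds beyond $\Pi_{\timestampagreement}$ as a ``constant number'' rather than a single round, and for Fair Timestamp it explicitly reinvokes the threshold-signature argument (which you already made under Unique Timestamp) to ensure the submitted $\timestamp$ really is the honest $\Pi_{\timestampagreement}$ output.
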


\section{Timestamp Agreement and The Fairness Guarantees}\label{section:fairness-stuff} 
Nodes in the DCN need to agree on a timestamp for each transaction. This problem reduces to achieving asynchronous Byzantine Agreement (aBA), with a special Validity condition, which will allow us to argue why transactions are ordered in a fair manner.
We recall the classical definition of $\byzantineagreement,$ which requires the following properties:
\emph{(Weak~Validity)} If all honest nodes have input $\timestamp,$ no honest node outputs $\timestamp' \neq \timestamp$;
\emph{(Agreement)} If two honest nodes output $\timestamp$ and $\timestamp',$ then $\timestamp = \timestamp'$; \emph{(Termination)} Every honest node outputs with probability $1.$

While $\byzantineagreement$ is an essential building block in distributed computing, it comes with many limitations.
We first note the seminal result of \cite{FLP}, which proves that fault-tolerant $\byzantineagreement,$ even with binary inputs, cannot be solved deterministically. There is still hope however, as the distributed computing literature offers plenty of randomized $\byzantineagreement$ protocols \cite{mostefaoui2015signature,rabin1993randomized,berman1993randomized,friedman2005simple,bracha1987asynchronous,toueg1984randomized,canetti93fast,cachin2000random}.

Unfortunately, there is another limitation that prevents us from directly applying existing $\byzantineagreement$ protocols to our setting, standing in its Weak Validity condition:~this only ensures that honest nodes agree on an honest input if they joined $\byzantineagreement$ with the same input.
This pre-agreement condition is a very strong requirement in our setting, and hence nodes may often end up agreeing on timestamps proposed by corrupted nodes. Such timestamps may be too low or too high, preventing us from ensuring any kind of fair ordering.
We add that achieving a stronger condition that requires the honest nodes to always agree on some honest node's input is impossible, as one cannot distinguish between an honest node and a byzantine node following the protocol correctly, but with a corrupted input. 

\vspace{0.15cm}
\noindent\textbf{Meaningful Timestamps.}
Fortunately, there are still a few Validity variations we can consider. 
In the following definitions, we will make use of the timestamps that the nodes record when receiving messages from the user. We need to consider that, if the user is dishonest, some honest nodes might not hold such a timestamp.
Note that there is at least one honest node who has received a message from this user (otherwise the user is essentially not sending a transaction). Then, let $\tau_{\max}$ denote the latest point in time recorded by an honest node when receiving this user's message. In the definitions, we assume that, if an honest node does not receive such a message, its input is $\tau_{\max}.$ We stress that this assumption is strictly for simplicity of presentation and is not used in our protocols or their analysis.

With this convention in mind, we may provide stronger Validity definitions.
In our setting, ensuring that honest nodes' outputs are in their inputs' range is already meaningful (\emph{Honest-Range~Validity}). This enables the order fairness definition below, discussed in \cite{zhang2020byzantine}. 
\begin{definition}[Honest-Range Fairness]
    Let $\tx$ and $\tx'$ denote two transactions.
    If all honest nodes receive the hash of $\tx$ before any honest node receives the hash of $\tx',$  then $\tx$ will be ordered before $\tx'.$
\end{definition}
Honest-Range Validity has been studied in the synchronous setting \cite{vaidya2013byzantine}. In the asynchronous setting, however, this condition has only been considered under much weaker Agreement requirements, which allow the honest outputs to be $\varepsilon$-close for some predefined $\varepsilon > 0$; see \cite{OPODIS:AAD04}.

One could hope that a stronger order-fairness definition is possible. Our first attempt is as follows:~if, at some time $\tau,$ most honest nodes have received the hash of some transaction $\tx,$ while most honest nodes are yet to receive the hash of some transaction $\tx',$ then $\tx$ should be ordered before $\tx'.$ We express this condition with the help of the medians of the honest nodes' receipt timestamps: 
\begin{definition}[Median Fairness]\label{def:attempt-fair}
    Suppose the hashes of transactions $\tx$ and $\tx'$ are received by the honest nodes at times $\timestamp_1 \leq \timestamp_2 \leq \ldots \leq \timestamp_{n-f}$ and resp. $\timestamp'_1 \leq \timestamp'_2 \leq \ldots \leq \timestamp'_{n-f}.$ 
    Then, if $\tau_\mu < \tau_{\mu}',$ where $\mu = \lceil (n - f) / 2 \rceil$ denotes the index of the median, $\tx$ will be ordered before $\tx'.$
\end{definition}

To achieve this order fairness definition, we need honest nodes to agree on the median of their timestamps. Consider the ($\delta$-Median Validity) condition below, introduced by Stolz and Wattenhofer in \cite{stolz2016byzantine}, for $n > 3f.$
\begin{itemize}
\item{($\delta$-Median Validity) Assume the honest inputs are arranged in non-decreasing order in an array $T,$ and $T_i$ is the $i$-th value in $T.$ If
an honest node outputs $\timestamp,$ then $\timestamp \in [T_{\mu - \delta}, T_{\mu + \delta}]$ (i.e., $\timestamp$ is $\delta$-positions-close to $T_{\mu}$), where $\mu = \lceil (n - f) / 2 \rceil.$}
\end{itemize}

Then, Median Fairness requires $0$-Median Validity. This definition however cannot be achieved even in a synchronous network, as stated in Lemma~\ref{lemma:sync-delta}, following directly from \cite{stolz2016byzantine,melnyk2018byzantine}.
\begin{lemma} \label{lemma:sync-delta}
    If $n > 3f$ and $\delta < \lceil f/2 \rceil,$ there is no synchronous protocol achieving Termination and $\delta$-Median Validity.
\end{lemma}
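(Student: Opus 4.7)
\medskip
\noindent\textbf{Proof proposal.} The plan is a standard swapping/indistinguishability argument: I would build two executions that any fixed honest node cannot tell apart, but whose honest input multisets have medians that are so far apart that no single output can be $\delta$-close to both medians. Concretely, assume for contradiction that a synchronous protocol $\Pi$ achieves Termination and $\delta$-Median Validity with $\delta < \lceil f/2 \rceil$, which one checks (by splitting on the parity of $f$) is equivalent to $2\delta \le f-1$, i.e., $2\delta < f$.

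Fix $n$ distinct integers $v_1 < v_2 < \cdots < v_n$ (say $v_i = i$) and label the nodes $1, \dots, n$, giving input $v_i$ to node $i$. Consider two executions:
\begin{itemize}
  \item $E_A$: the adversary corrupts the top-$f$ nodes $\{n-f+1,\dots,n\}$ but instructs them to execute $\Pi$ faithfully on their assigned inputs.
  \item $E_B$: the adversary corrupts the bottom-$f$ nodes $\{1,\dots,f\}$, again instructing them to follow $\Pi$ honestly.
\end{itemize}
Since byzantine nodes are allowed to follow the protocol, every message exchanged in $E_A$ and in $E_B$ is identical to the corresponding message in the fully honest execution with input vector $(v_1,\dots,v_n)$. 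In particular, for any node $h \in H_A \cap H_B = \{f+1,\dots,n-f\}$ (nonempty since $n > 3f \ge 2f+1$), the view of $h$ and hence its output distribution is identical in $E_A$ and $E_B$; so Termination plus $\delta$-Median Validity must hold almost surely in both with the \emph{same} output value.

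Applying $\delta$-Median Validity in $E_A$, whose sorted honest-input array is $T = (v_1,\dots,v_{n-f})$ with median $T_\mu = v_\mu$, the output lies in $[v_{\mu-\delta}, v_{\mu+\delta}]$; in $E_B$, the sorted honest array is $T' = (v_{f+1},\dots,v_n)$ with $T'_\mu = v_{f+\mu}$, so the output lies in $[v_{f+\mu-\delta}, v_{f+\mu+\delta}]$. Because the $v_i$ are strictly increasing, the two intervals share a common value only if $\mu + \delta \ge f + \mu - \delta$, i.e., $2\delta \ge f$, contradicting $2\delta < f$. The one subtle point I expect to flag carefully is the randomized/adaptive case: one must note that because the simulated byzantine strategy is fully protocol-compliant, the adversary's ``view'' in $E_A$ and $E_B$ is also just the transcript of an honest run, so no adaptive choice can break the indistinguishability for the fixed node $h$; the rest is routine.
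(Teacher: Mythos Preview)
Your proposal is correct. The paper itself does not give a proof of this lemma but states that it follows directly from \cite{stolz2016byzantine,melnyk2018byzantine}; your two-scenario indistinguishability argument (corrupt the top-$f$ versus the bottom-$f$ nodes, both behaving honestly, and observe that the output of any node in $\{f+1,\dots,n-f\}$ must simultaneously lie in two disjoint windows when $2\delta < f$) is exactly the standard argument underlying those references, so there is no substantive difference to flag.
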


We therefore weaken our Median Fairness definition to allow some error.
\begin{definition}[$\delta$-Median Fairness]
    Suppose the hashes of transactions $\tx$ and $\tx'$ are received by the honest nodes at times $\timestamp_1 \leq \timestamp_2 \leq \ldots \leq \timestamp_{n-f}$ and $\timestamp'_1 \leq \timestamp'_2 \leq \ldots \leq \timestamp'_{n-f}$ respectively.
    Let $\mu = \lceil (n - f) / 2 \rceil$ denote the index of the median.
    Then, if $\tau_{\mu + \delta} < \tau'_{\mu - \delta},$ transaction $\tx$ will be ordered before transaction $\tx'.$
\end{definition}

We note that $\delta$-Median Validity has only been considered in the synchronous model \cite{stolz2016byzantine,melnyk2018byzantine}, meaning that even the slightest increased network delay may cause the protocols of \cite{stolz2016byzantine,melnyk2018byzantine}, which achieve $\delta$-Median Validity, to completely fall apart. This motivates us to study $\delta$-Median Validity in the asynchronous model. First, we show a lower bound on the $\delta$ achievable for the asynchronous case.
Later on, we will further show this bound to be tight.

\begin{restatable}{lemma}{asyncDelta}\label{lemma:async-delta}
If $n > 3f$ and $\delta < f,$ there is no asynchronous protocol achieving Termination and $\delta$-Median Validity.
\end{restatable}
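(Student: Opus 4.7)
The plan is to derive a contradiction using the standard indistinguishability-plus-validity technique for asynchronous lower bounds. Suppose an asynchronous protocol $\Pi$ achieves Termination together with $\delta$-Median Validity for some $\delta < f$. We will build executions of $\Pi$ that an honest node cannot distinguish, yet whose $\delta$-Median Validity windows force the output into disjoint ranges.

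I would first treat the extremal case $n = 3f+1$ and partition the nodes into three sets $A$, $B$, $C$ with $|A| = |C| = f$ and $|B| = f+1$; the set $B$ will be honest in every execution and will serve as the common overlap. In execution $E_1$, I declare $C$ byzantine and $A \cup B$ honest, assigning $A$ a low input and $B$ a middle input; the byzantine nodes in $C$ then simulate honest-looking messages based on a chosen high input. In execution $E_2$, I declare $A$ byzantine and $B \cup C$ honest, with $B$ keeping its middle input and $C$ receiving the high input, while byzantine $A$ simulates the low input. The simulations are chosen so that every node in $B$ receives the same multiset of messages in both executions, because each simulated message in one execution matches an actual honest message in the other. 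Hence, by Agreement within each execution and the inability of nodes in $B$ to distinguish $E_1$ from $E_2$, both executions output the same value $x$.

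Next I would combine this with the $\delta$-Median Validity constraints in $E_1$ and $E_2$. The honest multisets of the two executions differ by the values contributed by $A$ versus those contributed by $C$, so the ranges $[T^{(i)}_{\mu-\delta}, T^{(i)}_{\mu+\delta}]$ are offset with respect to each other. A plain two-execution argument already yields a contradiction for small $f$, but in general only yields a singleton intersection. To push the argument through for all $\delta < f$, I would iterate the construction into a chain of executions $E_1, E_2, \ldots, E_K$ where consecutive byzantine sets differ by a single element and the corresponding honest multiset drifts monotonically. Indistinguishability between $E_i$ and $E_{i+1}$ is maintained to the $(n-2)$-sized overlap $H_i \cap H_{i+1}$, and composing it with Agreement within each $E_i$ forces $x$ to lie in every Validity window along the chain.

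The main obstacle is engineering the inputs and the byzantine-sliding so that the medians span enough positions to make the intersection of the windows empty. The asynchronous setting gives the adversary $f$ byzantine corruptions and the ability to delay $f$ further honest messages, which intuitively amounts to $2f$ degrees of freedom; this is precisely the amount required to shift the honest median by up to $2f$ positions across the chain, thereby leaving a window of width $2\delta + 1 < 2f + 1$ insufficient to contain a common output. Selecting node inputs with the right values and multiplicities (so that each single-element swap moves the median by one position in sorted order), and verifying that the byzantine simulation is consistent at every step, then produces the desired empty intersection and closes the contradiction.
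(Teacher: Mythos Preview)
Your plan has a real gap. The two-execution construction and its chain extension, as you describe them, make the byzantine nodes \emph{simulate} honest behaviour on fixed inputs; nothing in the construction actually uses message delays. But an indistinguishability argument that does not exploit delays is a synchronous argument, and synchronous protocols achieving $\lceil f/2\rceil$-Median Validity with Termination do exist (this is the tight direction of Lemma~\ref{lemma:sync-delta}). Concretely: once every node plays a single fixed transcript across the whole chain, the only remaining freedom is which $f$ nodes are \emph{labelled} byzantine, and over all such labelings the honest median ranges over at most $f$ consecutive input values; the two extreme validity windows become disjoint only when $2\delta < f$, so you recover the synchronous bound $\delta \ge \lceil f/2\rceil$ and no more. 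Your closing paragraph correctly names delayed honest messages as the extra asynchronous resource giving ``$2f$ degrees of freedom'', but you never weave delays into the construction---the phrase ``each single-element swap moves the median by one position'' still refers to swapping byzantine labels, not to delaying anyone.

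The paper's proof uses delays directly and needs no chain at all. It fixes one honest node $v$ with input $2f+1$ and exhibits three executions that $v$ cannot tell apart: in one the $f$ byzantine nodes stay silent; in the other two the byzantine nodes play honestly while the messages of $f$ honest nodes---once the $f$ with lowest inputs, once the $f$ with highest---are delayed indefinitely. Because hiding $f$ honest inputs via delays comes \emph{on top of} the $f$ falsified byzantine inputs, the honest medians across the three executions are $\mu$, $f+\mu$, and $2f+\mu$, so the three validity intervals have empty intersection as soon as $\delta < f$. Two further remarks: the paper tracks only the output of the single node $v$ and never invokes Agreement, which the lemma does not assume, whereas your chain relies on Agreement to propagate the common output $x$ and would therefore prove only a weaker statement; and the overlap $H_i\cap H_{i+1}$ you cite has size $n-f-1$, not $n-2$.
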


\begin{proof}

    We assume that there is a protocol $\Pi$ achieving $\delta$-Median Validity and Termination.
    Let $\mu = \lceil (n - f) / 2 \rceil,$ and let $v$ denote an honest node. The input value of node $v$ will be  $2f + 1.$ We define the following scenarios:
    \begin{enumerate}[label=(\alph*)]
        \item \label{honest-middle} The $n - f$ honest nodes have inputs $f + 1, f + 2, \dots n,$ and the corrupted parties do not participate in the protocol. Then, $v$ must output a value in $[f + \mu - \delta, f + \mu + \delta].$
        \item \label{honest-left} The $n - f$ honest nodes have inputs $1, 2, \dots, n - f \geq 2f + 1.$ The $f$ corrupted nodes follow the protocol correctly with inputs  $n - f + 1, n - f + 2, \dots, n,$ while the messages of the honest nodes holding the $f$ lowest inputs are delayed. Here, $v$ should output a value in $[\mu - \delta, \mu + \delta].$ However, since from node $\clock$'s perspective, this scenario is indistinguishable from Scenario \ref{honest-middle}, $v$ must output a value in $[\mu - \delta, \mu + \delta] \cap [f + \mu - \delta, f + \mu + \delta] = [f + \mu - \delta, \mu + \delta].$
        \item The $n - f$ honest nodes have inputs $2 f + 1, 2 f + 2, \dots, n + f.$ The $f$ corrupted nodes follow the protocol correctly with inputs $f + 1, f + 2, \dots, f,$ while the messages of the $f$ honest nodes holding the $f$ highest inputs are delayed. Here, $v$ should output a value in $[2f + \mu - \delta, 2f + \mu + \delta].$ Note that, for node $\clock,$ this scenario is indistinguishable from Scenario \ref{honest-middle} and Scenario \ref{honest-left}. Therefore, node $\clock$ must output a value in $[f + \mu - \delta, \mu + \delta] \cap [2f + \mu - \delta, 2f + \mu + \delta] = [2f + \mu - \delta, \mu + \delta].$
    \end{enumerate}

    Since $\delta < f,$ we obtain that $\mu + \delta < \mu + f < 2f + \mu - \delta,$ therefore the interval $[2f + \mu - \delta, \mu + \delta]$ containing node $\clock$'s output is empty. This contradicts that $\Pi$ achieves Termination.
\end{proof}

Lemma~\ref{lemma:async-delta} showcases an important limitation, namely, in a purely asynchronous network, if $n = 3f + 1,$ one can only hope to achieve Honest-Range Validity, as in this case $f$-Median Validity degenerates to Honest-Range Validity.
We note here that previous work has shown that a single byzantine node can manipulate the median~\cite{kelkar2020order}. However, as timestamps satisfying $f$-Median Validity are still in the honest range and as transactions are not visible during ordering, we do not see it as a threat.

\vspace{0.15cm}
\noindent \textbf{Defining Timestamp Agreement.}
\textcolor{black}{To mitigate the limitation posed by Lemma \ref{lemma:async-delta}, we take into account that real-world networks are not as unreliable as the asynchronous model. Hence, we aim to provide better guarantees if the network \emph{happens to be synchronous}.} We investigate whether we can achieve best-of-both-worlds guarantees, in line with many recent works \cite{TCC:BluKatLos19,Crypto:BLL20,PODC:ApChCh22,PODC:GhLiWa22,TCC:DHL21}. That is, we investigate whether there is an asynchronous protocol ensuring $f$-Median Validity that can additionally offer the stronger guarantee of $\lceil f/2 \rceil$-Median Validity if the network \emph{happens to be synchronous} for sufficient time. Therefore, we introduce the following variant of $\byzantineagreement.$
\begin{definition}[Timestamp Agreement]
    An $n$-nodes protocol, where each node may hold an integer timestamp as input, achieves Timestamp Agreement ($
    \timestampagreement$) if, even when $f$ of the nodes are corrupted, it achieves Agreement, $f$-Median Validity, and the following hold:
    \begin{itemize}
        \item if all honest nodes hold inputs, then all honest node obtain outputs with probability $1;$
        \item if less than $f + 1$ honest nodes hold inputs, then no honest node obtains output;
        \item if the synchrony assumptions hold for a sufficient amount of time and all honest parties receive their inputs accordingly, then $\lceil f/2 \rceil$-Median Validity is achieved.
    \end{itemize}
\end{definition}

\textcolor{black}{We note that we have proposed this definition taking into account that the user is not necessarily honest, and hence may not provide all honest nodes with inputs. If this is the case, our protocol still maintains $f$-Median Validity and Agreement. For the timestamp submission protocol, this implies that, if a dishonest user's transaction gets submitted to the chain, then the unique timestamp assigned to it still fits our $f$-Median Fairness definition. Hence, such adversarial behavior does not bring the dishonest user any real advantage.}

\textcolor{black}{We may now also define the term \emph{fair} timestamp, used in Theorem \ref{theorem:main-protocol}:~it is a timestamp satisfying $f$-Median Validity, and, if synchrony assumptions hold, $\lceil f/2 \rceil$-Median Validity.}

\section{The Timestamp Agreement Protocol} \label{sec:agreement}

In this section, we present our protocol achieving Timestamp Agreement secure against $f < n /3$ byzantine corruptions. 
Formally, we obtain the result below. 
Recall once again that there is no deterministic protocol achieving asynchronous Timestamp Agreement, a fact following directly from FLP \cite{FLP}.
\begin{theorem} \label{thm:timestamp-agreement}
    There is an $n$-nodes randomized protocol $\Pi_{\timestampagreement}$ achieving $\timestampagreement$ resilient against $f < n / 3$ byzantine corruptions. $\Pi_{\timestampagreement}$ has expected round complexity $\mathcal{O}(\log(\timestamp_{\max} - \timestamp_{\min})),$ where $\timestamp_{\min}$  and $\timestamp_{\max}$ denote the lowest and the highest honest inputs respectively. To achieve $\lceil f / 2 \rceil$-Median Validity, the synchrony assumptions must to hold for $\Delta_{\ext} + \Delta_{\dcn}$ time.
\end{theorem}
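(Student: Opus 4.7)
The construction of $\Pi_{\timestampagreement}$ is a pipeline of three standard building blocks: a reliable broadcast layer to disseminate inputs, an asynchronous approximate agreement subroutine to converge to a small interval, and a terminating randomized binary aBA to resolve the final discrepancy. First, each node that has an input reliably broadcasts its timestamp. The termination conditions of $\timestampagreement$ follow almost immediately from this layer: an honest node never moves on until it has delivered at least $n-f$ reliably-broadcast values, which is eventually achievable iff at least $f+1$ honest nodes actually hold inputs. This also handles the ``no output if fewer than $f+1$ honest inputs exist'' clause.

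Next, from the delivered inputs I would extract a candidate integer by the usual ``trim $f$ from each end'' trick. Waiting for exactly $n-f$ values and discarding the $f$ smallest and $f$ largest entries leaves a sub-vector of $n-3f$ values lying within $[T_{\mu-f},T_{\mu+f}]$, since at most $f$ of the delivered values can be corrupted; feeding any value of this sub-vector into asynchronous approximate agreement (e.g.\ \cite{OPODIS:AAD04}) yields an output within the same interval, giving $f$-Median Validity. Asynchronous AA halves the spread of honest candidates per round, so after $\mathcal{O}(\log(\timestamp_{\max}-\timestamp_{\min}))$ rounds the honest outputs lie in an interval of length strictly less than $1$; rounding to the nearest integer and running a single randomized binary $\byzantineagreement$ \cite{mostefaoui2015signature} on the residual bit finalises full Agreement with probability $1$, matching the expected round bound in the statement.

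To graft on the synchronous $\lceil f/2 \rceil$-Median Validity guarantee, I would run a second ``synchronous'' track in parallel with the asynchronous one. Each node starts a local timer of length $\Delta_{\ext}+\Delta_{\dcn}$ when it first becomes active; if by the timer's expiry it has heard from all $n$ nodes (or at least $n-\lceil f/2\rceil$), it derives a tighter candidate by trimming only $\lceil f/2 \rceil$ from each end, yielding a value in $[T_{\mu-\lceil f/2\rceil},T_{\mu+\lceil f/2\rceil}]$. Otherwise, it falls back to the asynchronous candidate. Since when synchrony genuinely holds every honest node sees the same set on time, all honest nodes feed the tighter candidate into the AA layer, and the AA+aBA stack preserves the tighter interval, yielding the strong guarantee; when synchrony fails, honest candidates may mix the two tracks, but both lie inside the honest range, so $f$-Median Validity is still preserved.

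\textbf{Main obstacle.} The subtle part is showing that mixing candidates from the two tracks never violates $f$-Median Validity, since an honest node using the ``synchronous'' candidate might pick an input that lies outside the trimmed asynchronous interval of another node. I would resolve this by observing that both candidates are always honest values bounded by $[T_{1},T_{n-f}]$, and the asynchronous AA protocol of \cite{OPODIS:AAD04} is safe for any inputs within the honest range. The second subtle point is proving that when synchrony holds, no honest node ever prematurely falls back: this requires the timer to be started only after the node has acquired enough evidence that synchrony has been respected so far (e.g., after delivering its first $n-f$ reliably-broadcast inputs within schedule), so that the clause ``synchrony holds for $\Delta_{\ext}+\Delta_{\dcn}$ time'' translates cleanly into every honest node completing the synchronous track. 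The remaining properties — Agreement and the logarithmic round count — are inherited directly from the AA and binary aBA components and require no new argument.
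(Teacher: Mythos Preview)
Your three-stage pipeline (input dissemination $\to$ asynchronous AA $\to$ binary $\byzantineagreement$ for rounding) matches the paper's structure, and the round-complexity accounting is right. The real divergence is in the first stage, and there your two-track design has a genuine gap at precisely the point you flag as the ``main obstacle''. Your proposed resolution---``both candidates are always honest values bounded by $[T_1,T_{n-f}]$''---is false for the synchronous-track candidate when the network is actually asynchronous. If the adversary rushes all $f$ corrupted messages while delaying $\lceil f/2\rceil$ honest ones, an honest node satisfies your trigger ``heard from at least $n-\lceil f/2\rceil$ nodes'' with a set containing every corrupted value; trimming only $\lceil f/2\rceil$ from each end then leaves up to $\lfloor f/2\rfloor$ corrupted entries, possibly outside the honest range altogether. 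Feeding such a value into AA destroys not just $\lceil f/2\rceil$-Median Validity but even $f$-Median Validity. Your second fix (delaying the timer start until ``enough evidence that synchrony has been respected'') cannot help, because under asynchrony the adversary controls which messages arrive on schedule.

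The paper avoids this by \emph{not} running two tracks. Its $\Pi_{\init}$ has every node wait for both the timer $\Delta_{\ext}+\Delta_{\dcn}$ \emph{and} at least $n-f$ values, record how many values $n-f+k$ it actually received ($0\le k\le f$), and output the single entry $R_{\mu+\lfloor k/2\rfloor}$ of the sorted array. The index adapts to $k$, and Lemmas~\ref{lemma:async-validity} and~\ref{lemma:sync-validity} show this one choice gives $f$-Median Validity unconditionally and $\lceil f/2\rceil$-Median Validity whenever synchrony held---no mixing, no fallback. Two smaller remarks: reliable broadcast in your Step~1 is unnecessary (the paper uses plain sends, since nodes need only \emph{some} multiset of timestamps, not a consistent one), and your ``residual bit'' for the rounding stage should be made explicit as the \emph{parity} of the nearest integer, which is what makes Agreement go through uniformly across the three cases in the paper's Step~3.
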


\textcolor{black}{
Our protocol $\Pi_{\timestampagreement}$ consists of three steps. First, each node obtains a value satisfying $\delta$-Median Validity. This is the only step where synchrony assumptions are required to achieve $\delta = \lceil f/2 \rceil$ instead of $\delta = f.$ In the second step, nodes obtain very close values (they agree up to an error of $\varepsilon < 0.5$) within the range of values that honest nodes obtained in the first step. Agreement is then achieved in the last step, where each node decides whether to round its value obtained in the second step up or down. This will be done using $\byzantineagreement$ on the rounding option's parity. In the following, we describe each step of $\Pi_{\timestampagreement}$ in detail. }
\newline

\noindent\textbf{Step 1: $\delta$-Median Validity.}
We first design a protocol $\Pi_{\init}$ that only focuses on achieving $\delta$-Median Validity (while Agreement is covered by the subsequent steps).

Concretely, nodes send their input value to every party.
\textcolor{black}{
To obtain a good estimation on the honest inputs' median, the nodes aim to receive as many honest inputs as possible.
If the network is asynchronous, one may only expect to receive $n - f$ values.
On the other hand, if the network is synchronous, and the user initiated the transaction at some time $\tau,$ all honest inputs are received by time $\tau + \Delta_{\ext} + \Delta_{\dcn}.$ 
Then, nodes wait until they have received timestamps from at least $n - f$ nodes, and, until at least $\Delta_{\ext} + \Delta_{\dcn}$ time has passed since they have received the user's message. This way, if the synchrony assumptions hold, every honest timestamp is received.
}

Hence, each node $\clock$ collects $n - f + k$ timestamps, where $0 \leq k \leq f,$ and arranges them in an array $R$ in non-decreasing order. 
\textcolor{black}{
If the network is synchronous, at most $k$ of these values are corrupted. These may be lower than any honest input, hence shifting the honest median with at most $k$ positions to the right, or higher than any honest input. Therefore, the honest median is in the subarray $R_{\mu}, R_{\mu + 1}, \ldots, R_{
\mu + k},$ where $R_i$ denotes the $i$-th lowest value in $R$ and $\mu = \lceil (n - f) / 2 \rceil.$ Then, to obtain a value that is $\lceil f/2 \rceil$-positions-close to the median, $v$ outputs $\timestamp_{\mu} := R_{\mu + \lfloor k / 2 \rfloor},$ i.e., the median of the subarray $R_{\mu}, R_{\mu + 1}, \ldots, R_{
\mu + k}.$
}

\textcolor{black}{
If the synchrony assumptions fail, however, the $n - f + k$ values from $R$ might come from $f$ corrupted nodes, and $n - 2f + k$ honest nodes. The $f - k$ missing honest timestamps provide the corrupted nodes with more power:~shifting the honest median $f$ positions to the right, or $f - k$ positions to the left. Regardless, the chosen output $\timestamp_{\mu} := R_{\mu + \lfloor k / 2 \rfloor}$ still ensures that $f$-Median Validity holds.
}

We formally present the code of $\Pi_{\init}$ below. 
We note that this is the only step requiring synchrony for achieving $\lceil f/2 \rceil$-Median Validity. In order to achieve the same guarantee even if the nodes' clocks are not perfectly synchronized, we may replace the waiting time by $\theta \cdot (\Delta_{\ext} + \Delta_{\dcn}),$ to ensure that the fastest node waits long enough.

\begin{protocolbox}{$\Pi_{\init}$}
	\algoHead{Code for node $\clock$ with input timestamp $\timestamp_{\inputt}$}
	\begin{algorithmic}[1]
            \State Send your input $\timestamp_\inputt$ to all nodes.
            \State After at least $\Delta_{\ext} + \Delta_{\dcn}$ time, and when $n - f + k$ timestamps ($0 \leq k \leq f$) are received:
            \State \hspace{0.5cm} $R$ := an array containing the timestamps received, ordered non-decreasingly.
            \State \hspace{0.5cm} Output $\timestamp_{\mu} := R_{\lceil (n - f) / 2 \rceil + \lfloor k / 2 \rfloor}.$
	\end{algorithmic}
\end{protocolbox}

We may now state and prove the properties of $\Pi_{\init}.$

The next property enables us to ensure safety guarantees even when the user initiating the process is dishonest. It follows immediately from line $2$ of $\Pi_{\init}.$
\begin{lemma} \label{lemma:no-outputs}
    \textcolor{black}{
        If less than $f + 1$ honest nodes provide inputs $\timestamp_{\inputt},$ then no honest node outputs. Otherwise, if each honest node provides an input $\timestamp_{\inputt},$ then all honest nodes output $\timestamp_{\mu}.$
    }
\end{lemma}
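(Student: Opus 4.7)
The plan is to derive both parts of the lemma directly from the wait predicate on line 2 of $\Pi_{\init}$, which releases a node only once it has collected $n - f + k$ timestamps (with $0 \leq k \leq f$) \emph{and} at least $\Delta_{\ext} + \Delta_{\dcn}$ local time has elapsed. Since line 4 is the only output statement, establishing when line 2 is reachable is enough.

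For the first claim, I would argue by a simple counting bound. If fewer than $f + 1$ honest nodes provide inputs, then at most $f$ honest nodes ever send a timestamp on line 1; combined with the at most $f$ corrupted nodes, the total number of possible senders is at most $2f$. An honest node can therefore receive at most $2f$ timestamps in total. The wait predicate asks for $n - f + k \geq n - f$ timestamps, and the assumption $n > 3f$ gives $n - f > 2f$, so the threshold can never be met. Consequently no honest node ever exits the wait and no honest node outputs.

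For the second claim, I would observe that when every honest node provides an input, at least $n - f$ honest nodes broadcast on line 1. Authenticated channels eventually deliver all such messages, so each honest node eventually holds at least $n - f$ timestamps; the time-based part of the wait predicate is automatically satisfied once local clocks advance past the $\Delta_{\ext} + \Delta_{\dcn}$ window. Hence both conjuncts of line 2 become true at every honest node in finite time, and each such node outputs the value $R_{\lceil(n-f)/2\rceil + \lfloor k/2\rfloor}$ for whatever value of $k \in \{0, \dots, f\}$ it happens to witness. A brief sanity check confirms that this index lies inside $R$, using $n > 3f$ to bound $\lceil(n-f)/2\rceil + \lfloor k/2\rfloor$ by $n - f + k$.

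I do not anticipate any real obstacle: the statement is a termination/non-termination fact that reads off the wait predicate, and the sole non-trivial observation is the counting bound $2f < n - f$ enabled by the resilience assumption $n > 3f$. The more delicate Validity properties of $\Pi_{\init}$ are deferred to separate lemmas, so here nothing beyond the reachability of line 4 needs to be argued.
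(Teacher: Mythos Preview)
Your proposal is correct and follows exactly the paper's approach: the paper simply states that the lemma follows immediately from line~2 of $\Pi_{\init}$, and your argument spells out precisely why, via the counting bound $2f < n - f$ (from $n > 3f$) for the first claim and eventual delivery for the second. There is nothing to add; your write-up is just a more detailed rendering of the paper's one-line justification.
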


The following lemmas show that the nodes indeed obtain values satisfying the desired Validity guarantees.

\begin{restatable}{lemma}{asyncValidityAA}\label{lemma:async-validity}
\textcolor{black}{If an honest node outputs a timestamp $\timestamp_{\mu},$ then $\timestamp_{\mu}$ satisfies $f$-Median Validity.}
\end{restatable}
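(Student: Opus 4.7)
The plan is to show that for any honest node outputting $\tau_\mu = R_p$ with $p := \mu + \lfloor k/2 \rfloor$, the inequalities $T_{\mu - f} \leq R_p \leq T_{\mu + f}$ hold, where $T_1 \leq \cdots \leq T_{n-f}$ is the sorted multiset of honest inputs (padding with $\tau_{\max}$ the slots of honest nodes that never received the user's message, following the convention from Section \ref{section:fairness-stuff}). The starting observation is structural: $R$ has size $n - f + k$, and since only $f$ nodes are corrupted in total, at most $f$ entries of $R$ are adversarial; the rest come from distinct honest participants and therefore correspond to distinct entries of $\{T_1, \ldots, T_{n-f}\}$.

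For the lower bound $T_{\mu - f} \leq R_p$ I would argue as follows. Among the $p$ smallest entries of $R$, at most $f$ are corrupted, hence at least $p - f$ are honest. Each such entry has value at most $R_p$ and corresponds to some participating honest input. Since $p - f = \mu - f + \lfloor k/2 \rfloor \geq \mu - f$, the honest-inputs array has at least $\mu - f$ entries that are $\leq R_p$, which immediately gives $T_{\mu - f} \leq R_p$.

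For the upper bound $R_p \leq T_{\mu + f}$ I would look symmetrically at the top $|R| - p + 1$ entries of $R$. At most $f$ are corrupted, so at least $|R| - p + 1 - f = n - 2f + \lceil k/2 \rceil - \mu + 1$ of them are honest. A direct calculation using $\mu = \lceil (n-f)/2 \rceil$ and $n \geq 3f + 1$ shows this count is at least $1$, so some genuine participating honest input is $\geq R_p$; this forces $\tau_{\max} \geq R_p$ and, by definition, every padded entry as well. Counting the honest entries in $R$ at positions $\geq p$ together with the $n - f - h$ padded entries (where $h$ denotes the number of participating honest nodes) yields at least $(|R| - p + 1 - f) + (n - f - h)$ of the $T_i$'s that are $\geq R_p$, and the bound $h \leq n - f$ makes this sum at least $n - \mu - 2f + 1$, which is exactly what is required for $T_{\mu + f} \geq R_p$.

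The main obstacle I anticipate is cleanly handling the dishonest-user case in which $h < n - f$, because then the honest-inputs array used by the validity definition contains artificial padded entries at its top. The computation above sidesteps this by first certifying that some participating honest value lies above $R_p$, from which the padded entries inherit the inequality for free. The slack $n - 3f \geq 1$ is precisely what makes the minimum count of honest entries above position $p$ in $R$ strictly positive, and it is the only place in the argument where the assumption $n > 3f$ is invoked nontrivially.
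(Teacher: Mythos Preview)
Your proof is correct and takes essentially the same counting approach as the paper: your lower bound coincides with theirs, and for the upper bound you count honest entries above position $p$ in $R$, which is the dual of the paper's count of how many entries of $T$ are missing from $R$. Your explicit treatment of padded entries in the dishonest-user case is sound but actually redundant, since the bare count $|R|-p+1-f = n-2f+\lceil k/2\rceil-\mu+1 \geq n-2f-\mu+1$ already yields $T_{\mu+f}\geq R_p$ without invoking the $n-f-h$ padded terms.
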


\begin{proof}
    If an honest node $v$ has obtained a timestamp $\timestamp_{\mu},$ then it has received $n - f + k$ values, where $0 \leq k \leq f.$ Out of these, at least $n - 2f + k$ values are honest.
    
    Let $T$ denote the array of honest inputs arranged in non-decreasing order.
    We show that $T_{\lceil (n - f) / 2 \rceil - f} \leq \timestamp_{\mu} = R_{\lceil (n - f) / 2 \rceil + \lfloor k/2 \rfloor} \leq T_{\lceil (n - f) / 2 \rceil + f}.$
    
    For the upper bound, note that $R$ may miss $f - k$ out of the values in $T,$ hence at most $f - k$ of the values $T_i$ with $i \leq \lceil (n - f) / 2 \rceil + \lfloor k/2 \rfloor.$ This implies that  $R_{\lceil (n - f) / 2 \rceil + \lfloor k/2 \rfloor} \leq T_{\lceil (n - f) / 2 \rceil + \lfloor k/2 \rfloor + (f - k)} \leq T_{\lceil (n - f) / 2 \rceil + f}.$ 
    
    For the lower bound, note that $R$ contains at most $f$ corrupted values, hence at most $f$ additional values that are lower than $T_{\lceil (n - f) / 2 \rceil + \lfloor k/2 \rfloor}.$ Then, we obtain that $R_{\lceil (n - f) / 2 \rceil + \lfloor k/2 \rfloor} \geq T_{\lceil (n - f) / 2 \rceil + \lfloor k/2 \rfloor - f} \geq T_{\lceil (n - f) / 2 \rceil - f}.$

\end{proof}

We now show that $\Pi_{\init}$ achieves $\lceil f/2 \rceil$-Median Validity if the synchrony assumptions hold, using a similar argument to the proof of Lemma \ref{lemma:sync-validity}. The key difference is that at least $n - f$ of the values received are honest (as opposed to $n - 2f + k$).

\begin{restatable}{lemma}{syncValidityAA}\label{lemma:sync-validity}
    If all honest nodes obtain inputs $\timestamp_{\inputt}$ and join $\Pi_{\init}$ between time $\timestamp_{\startt}$ and time $\timestamp_{\startt} + \Delta_{\ext},$ and the synchrony assumptions hold until time $\timestamp_{\startt} + \Delta_{\ext} + \Delta_{\dcn},$ then all honest nodes output timestamps $\timestamp_{\mu}$ satisfying $\lceil f/2 \rceil$-Median Validity. 
\end{restatable}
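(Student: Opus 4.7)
The plan is to leverage the synchrony assumption to argue that the array $R$ collected by every honest node already contains all $n-f$ honest inputs. Since every honest node sends its input by time $\timestamp_{\startt} + \Delta_{\ext}$ and point-to-point messages are delivered within $\Delta_{\dcn}$, all honest inputs reach every honest node by time $\timestamp_{\startt} + \Delta_{\ext} + \Delta_{\dcn}$. Because each honest node waits for $\Delta_{\ext} + \Delta_{\dcn}$ time before producing its output, at the moment it finishes waiting the bundle of received timestamps contains all $n-f$ honest values. Therefore, if the node received $n-f+k$ timestamps in total, exactly $n-f$ are honest and at most $k$ are corrupt (with $0 \le k \le f$).

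Second, I would analyze how inserting the (at most) $k$ corrupt values into the sorted honest array $T$ distorts indices. Let $R$ be the sorted array of all received values. Since $R$ is obtained from $T$ by inserting $k$ extra values, a simple position-counting argument yields the two-sided bound
\[
T_{j - k} \;\le\; R_j \;\le\; T_j
\]
for every index $j$ with $1 \le j-k$ and $j \le n-f$. The lower bound holds because at most $k$ of the values $\le R_j$ can be corrupt, so at least $j-k$ honest values lie $\le R_j$; the upper bound holds because $R$ is $T$ with $k$ inserted values, so the $j$-th smallest value of $R$ cannot exceed the $j$-th smallest honest value.

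Third, I would specialise to $j = \mu + \lfloor k/2 \rfloor$, which is the index the protocol outputs. The lower bound becomes
\[
R_j \;\ge\; T_{\mu + \lfloor k/2 \rfloor - k} \;=\; T_{\mu - \lceil k/2 \rceil} \;\ge\; T_{\mu - \lceil f/2 \rceil},
\]
using $k \le f$. The upper bound becomes
\[
R_j \;\le\; T_{\mu + \lfloor k/2 \rfloor} \;\le\; T_{\mu + \lceil f/2 \rceil}.
\]
Together these show $\timestamp_\mu \in [T_{\mu - \lceil f/2 \rceil},\, T_{\mu + \lceil f/2 \rceil}]$, i.e.\ $\lceil f/2 \rceil$-Median Validity. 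A brief sanity check using $n > 3f$ confirms that the indices $\mu \pm \lceil f/2 \rceil$ are in the range $[1, n-f]$, so all references to $T$ are well-defined.

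The main conceptual hurdle is the index bookkeeping in the second step: one must correctly account for the fact that the corrupt insertions can be either below or above the honest median, and that the output shift $\lfloor k/2 \rfloor$ exactly balances these two worst cases so that the same $\lceil f/2 \rceil$ slack suffices on both sides regardless of how the adversary distributes its $k$ values. Everything else is a direct consequence of synchrony together with Lemma~\ref{lemma:no-outputs} ensuring that every honest node actually reaches the output line.
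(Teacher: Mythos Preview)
Your proposal is correct and follows essentially the same approach as the paper: first use synchrony to argue every honest node's array $R$ contains all $n-f$ honest inputs plus at most $k$ corrupt ones, then do the index-shifting bookkeeping $T_{j-k}\le R_j\le T_j$ specialised to $j=\mu+\lfloor k/2\rfloor$. The only cosmetic difference is that the paper records the slightly tighter upper bound $T_{\mu+\lfloor f/2\rfloor}$ (rather than your $T_{\mu+\lceil f/2\rceil}$), but either suffices for $\lceil f/2\rceil$-Median Validity.
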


\begin{proof}
    We first show that all honest timestamps are received by time $\timestamp_{\startt} + \Delta_{\ext} + \Delta_{\dcn}.$ Each honest node sends its input to all other nodes by time $\timestamp_{\startt} + \Delta_{\ext}.$
    Since the network is synchronous, these values are received within $\Delta$ time, hence by time $\timestamp_{\startt} + \Delta_{\ext} + \Delta_{\dcn}.$
    Then, since all honest nodes start the execution of the protocol at time at least $\timestamp_{\startt},$ the protocol ensures that each honest node waits until time at least $\timestamp_{\startt} + \Delta_{\ext} + \Delta_{\dcn},$ and hence receives all honest timestamps.

    Then, for every honest node, $R$ contains all honest values, and $0 \leq k \leq f$ values from corrupted nodes. If $T$ denotes the array of honest timestamps arranged in non-decreasing order, we need to show that $T_{\lceil (n - f) / 2 \rceil - \lceil f/2 \rceil} \leq R_{\lceil (n - f) / 2 \rceil + \lfloor k/2 \rfloor} \leq T_{\lceil (n - f) / 2 \rceil + \lfloor f/2 \rfloor}.$

    We first focus on the upper bound: since $R$ contains all values $T_i,$ and $k \leq f,$ the inequality $R_{\lceil (n - f) / 2 \rceil + \lfloor k/2 \rfloor} \leq T_{\lceil (n - f) / 2 \rceil + \lfloor k/2 \rfloor} \leq T_{\lceil (n - f) / 2 \rceil + \lfloor f/2 \rfloor}$ holds.

    For the lower bound, we note that $R$ contains at most $k + \mu +  \lfloor k/2 \rfloor$ values lower than $T_{\lceil (n - f) / 2 \rceil + \lfloor k/2 \rfloor}.$ Out of these  $k + \mu +  \lfloor k/2 \rfloor$ values, at most $k$ are corrupted.
    This means that $R_{\lceil (n - f) / 2 \rceil + \lfloor k/2 \rfloor} \geq T_{\lceil (n - f) / 2 \rceil + \lfloor k/2 \rfloor - k} = T_{\lceil (n - f) / 2 \rceil - \lceil k/2 \rceil} \geq T_{\lceil (n - f) / 2 \rceil - \lceil f/2 \rceil},$ which concludes our proof.
\end{proof}

\noindent\textbf{Step 2: Agreement up to a small error.}
Honest nodes have obtained timestamps $\timestamp_{\mu}$ satisfying $\delta$-Median Validity via $\Pi_{\init}.$ We now take a step towards achieving Agreement. We make use of an asynchronous protocol $\Pi_{\approxagreement}$ achieving Approximate Agreement \cite{OPODIS:AAD04}.
That is, $\Pi_{\approxagreement}$ ensures that, for any given $\varepsilon > 0,$ honest nodes obtain $\varepsilon$-close values $\timestamp_\approxagreement$ within the range of their values $\timestamp_{\mu}$ (maintaining $\delta$-Median Validity). Lemma \ref{lemma:AA-median} states the properties of $\Pi_{\approxagreement},$ and follows directly from \cite{OPODIS:AAD04}. In our case, any constant $\varepsilon < 0.5$ suffices.
\begin{lemma}\label{lemma:AA-median}
    If an honest node outputs $\timestamp_{\approxagreement},$ then 
 $\timestamp_{\approxagreement}$ is within the range of timestamps $\timestamp_{\mu}$ obtained by honest nodes in $\Pi_{\init}.$ If two honest nodes output $\timestamp_{\approxagreement}$ and $\timestamp_{\approxagreement}',$ then 
    $\abs{\timestamp_{\approxagreement} - \timestamp_{\approxagreement}'} < \varepsilon < 0.5.$ \textcolor{black}{In addition, if less than $f + 1$ honest nodes hold timestamps $\timestamp_{\mu},$ then no honest node outputs; while if all honest nodes hold timestamps $\timestamp_{\mu},$ then all honest nodes output}.
\end{lemma}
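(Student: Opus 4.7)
The plan is to instantiate $\Pi_{\approxagreement}$ with the asynchronous Approximate Agreement protocol of Abraham, Amit, and Dolev~\cite{OPODIS:AAD04}, gated by a simple input-threshold mechanism that handles the possibility that the initiating user is dishonest and some honest nodes never obtain a $\timestamp_{\mu}$ from $\Pi_{\init}$.

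For the first two conclusions, range validity and $\varepsilon$-closeness, my plan is to quote~\cite{OPODIS:AAD04} directly. Their protocol, instantiated with any constant precision $\varepsilon < 0.5$, guarantees that under $f < n/3$ byzantine corruptions every honest output lies in the convex hull of the honest inputs and that any two honest outputs differ by less than $\varepsilon$. Since the AA inputs here are the values $\timestamp_{\mu}$, this immediately yields that every honest $\timestamp_{\approxagreement}$ lies in the range of honest $\timestamp_{\mu}$'s and that $\abs{\timestamp_{\approxagreement} - \timestamp_{\approxagreement}'} < \varepsilon < 0.5$ for any two honest outputs. I would note in passing that any $\delta$-Median Validity guarantee already established for $\Pi_{\init}$ (Lemmas~\ref{lemma:async-validity} and~\ref{lemma:sync-validity}) therefore transfers automatically to $\timestamp_{\approxagreement}$, modulo the strict-inequality-vs-rounding subtlety that will be handled by the next step.

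The main obstacle is the pair of input-sensitive termination properties, since the vanilla AAD'04 protocol assumes every honest node participates with some input. My plan is to precede the AA phase by a lightweight gate: when a node computes $\timestamp_{\mu}$ in $\Pi_{\init}$, it broadcasts a \emph{ready} signal to all nodes, and every node waits to receive such signals from $2f + 1$ distinct senders before supplying its own $\timestamp_{\mu}$ as input to the underlying $\Pi_{\approxagreement}$. If fewer than $f + 1$ honest nodes ever hold a $\timestamp_{\mu}$, then at most $f$ corrupted plus fewer than $f + 1$ honest, i.e., at most $2f$ distinct ready signals can ever reach any honest node, so no honest node starts AA and none outputs. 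Conversely, if every honest node holds a $\timestamp_{\mu}$, there are $n - f \geq 2f + 1$ honest ready signals, so every honest node eventually engages in AA and, by the termination property of~\cite{OPODIS:AAD04}, produces a $\timestamp_{\approxagreement}$. Since every input fed into the gated AA instance is a legitimate honest $\timestamp_{\mu}$, the range and $\varepsilon$-agreement properties established above apply to the resulting outputs unchanged.
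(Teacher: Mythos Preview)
Your proposal is correct, but it takes a more elaborate route than the paper. The paper does not introduce any additional gating mechanism before $\Pi_{\approxagreement}$; instead, it observes that the Abraham--Amit--Dolev protocol already has the needed structure built in: every round requires each honest node to wait for messages from $n - f$ distinct senders. If fewer than $f + 1$ honest nodes hold a $\timestamp_{\mu}$, then at most $f$ honest plus $f$ corrupted nodes can ever send, giving at most $2f < n - f$ messages, so no honest node advances and none outputs. Conversely, if all honest nodes hold $\timestamp_{\mu}$, the standard termination argument of~\cite{OPODIS:AAD04} applies directly. For safety in the intermediate case (some but not all honest nodes participate), the paper invokes the generic asynchronous indistinguishability argument: a non-participating honest node is indistinguishable from one whose messages are indefinitely delayed, so the range and $\varepsilon$-closeness guarantees carry over to whatever outputs are produced.

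Your ready-signal gate with threshold $2f + 1$ achieves the same effect and is sound, but it is redundant: the $n - f$ waiting inside $\Pi_{\approxagreement}$ already enforces exactly the condition your gate checks. What your approach buys is a clean separation between ``who may enter AA'' and ``what AA guarantees,'' which may read more modularly; what the paper's approach buys is simplicity and no extra communication round. One small point you gloss over is the justification for safety when only a strict subset of honest nodes participate---you assert the range and $\varepsilon$-agreement properties ``apply unchanged,'' but the clean argument is the indistinguishability-from-delay one the paper spells out, and it would strengthen your write-up to state it explicitly.
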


The guarantee on obtaining outputs only when at least $f + 1$ honest nodes participate follows from the fact that $\Pi_{\approxagreement}$ requires nodes to wait for messages from $n - f$ distinct nodes. In addition, properties on honest nodes' outputs (if any) are ensured even if not all honest nodes participate since $\Pi_{\approxagreement}$ is an asynchronous protocol. Concretely, this is because this setting is indistinguishable from a scenario where the non-participating honest nodes are simply delayed.

 We add that $\Pi_{\approxagreement}$ does not make any assumption on the range of honest values $\timestamp_{\mu}$ to achieve these guarantees. It runs in iterations allowing the honest values to converge. If all honest nodes hold inputs $\timestamp_{\mu}$ and range size of these inputs is $\Delta_{\mu}$ ($\leq $ the difference between the times when the transaction hash is delivered to the honest nodes), then $\Pi_{\approxagreement}$ runs for $\mathcal{O}(\log (\Delta_{\mu} / \varepsilon))$ iterations.
 Each iteration consists of a constant number of communication rounds, and incurs message complexity $\mathcal{O}(n^3).$ Therefore, the round complexity of $\Pi_{\approxagreement}$ is  $\mathcal{O}(\log \Delta_{\mu}),$ and the message complexity is $\mathcal{O}(n^3 \log \Delta_{\mu}).$
\newline

\noindent \textbf{Step 3: Rounding.} Honest nodes have obtained $\varepsilon$-close values $\timestamp_{\approxagreement}$ satisfying $\delta$-Median Validity.
As depicted in Figure \ref{figure:even-odd-cases}, 
since $\varepsilon < 0.5,$ the range of honest values $\timestamp_{\approxagreement}$ either:
\begin{enumerate}[label=(\alph*)]
    \item\label{case:even} contains an even integer $\alpha$ such that $\abs{\alpha - \timestamp_{\approxagreement}} < 0.5$ for all honest values $\timestamp_{\approxagreement}$;
    \item\label{case:odd} contains an odd integer $\alpha$ such that $\abs{\alpha - \timestamp_{\approxagreement}} < 0.5$ for all honest values  $\timestamp_{\approxagreement}$;
    \item\label{case:between} is between two integers: $\alpha \leq \timestamp_{\approxagreement} \leq \alpha + 1$ for all honest values $\timestamp_{\approxagreement}.$
\end{enumerate}

\begin{figure}[h]\vspace{-15pt}
\centering
\includegraphics[width=0.38\textwidth]{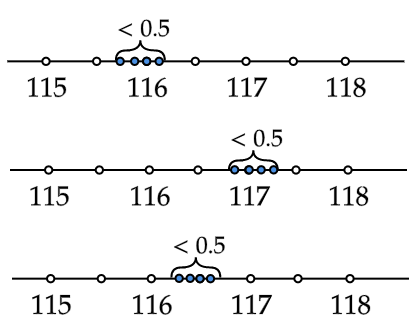} 
\caption{A few examples of possible outputs obtained in $\Pi_{\approxagreement}.$ In the top and middle examples, representing cases 
\ref{case:even} and \ref{case:odd} respectively, honest outputs are close to a single integer. In the bottom example, representing case \ref{case:between}, some honest outputs are closer to $116,$ while some are closer to $117.$}\label{figure:even-odd-cases} 

\end{figure}

Then, the problem of achieving Agreement comes down to enabling the honest nodes to choose between rounding down or rounding up their values $\timestamp_{\approxagreement}.$ Making this decision for cases \ref{case:even} and \ref{case:odd} is trivial: honest nodes simply round their value $\timestamp_{\approxagreement}$ to the closest integer. Case \ref{case:between}, however, requires solving $\byzantineagreement.$
We therefore employ the randomized protocol $\Pi_{\byzantineagreement}$ of \cite{mostefaoui2015signature} that achieves $\byzantineagreement$ with binary inputs in expected round complexity $\mathcal{O}(1),$ with message complexity $\mathcal{O}(n^2).$ Note that we do not use $\byzantineagreement$ to decide on rounding either up or down, since this would break Agreement in cases \ref{case:even} and \ref{case:odd}. Instead, we use $\byzantineagreement$ to decide on the parity of the final rounding option.
\textcolor{black}{Once again, we note that if the user is dishonest and not all honest nodes were able to reach this stage, protocol $\Pi_{\byzantineagreement}$ still offers guarantees. Namely, if less than $f + 1$ honest nodes have reached this stage, then no honest node obtains an output (as honest nodes are forced to wait for messages from $n - f$ distinct nodes). Otherwise, if honest nodes obtain outputs, these outputs still satisfy Weak Validity and Termination. This is the case even if not all honest nodes participate, since such a setting is indistinguishable from a scenario where the non-participating honest nodes' messages are simply delayed, and the guarantees of $\Pi_{\byzantineagreement}$ hold under asynchrony.}

Each node $\clock$ that has obtained a timestamp $\timestamp_{\approxagreement}$ picks two integers $\alpha$ and $\alpha + 1$ such that $\alpha \leq \timestamp_{\approxagreement} < \alpha + 1.$ Out of these two values, $\clock$ picks the one that is closer to its $\timestamp_{\approxagreement}$ as an initial rounding option, denoted by $\beta.$ Then, $\clock$ joins $\Pi_{\byzantineagreement}$ with input $b,$ representing the parity of $\beta,$ and may obtain output $b'.$ If $b' = b,$ it outputs $\beta,$ and otherwise it outputs its second rounding option.
In cases \ref{case:even} and \ref{case:odd}, honest nodes pick the same value $\beta.$ They join $\Pi_{\byzantineagreement}$ with the same input bit $b,$ and Weak Validity ensures they output $b' = b.$ Therefore, if sufficiently many honest nodes reached this stage, all participating honest nodes output $\beta,$ which still satisfies $\delta$-Median Validity.
In case \ref{case:between}, all honest nodes that reach this stage pick the same value $\alpha.$ In this case, because the input timestamps $\timestamp_{\inputt}$ are integers, both $\alpha$ and $\alpha + 1$ satisfy $\delta$-Median Validity. Even if honest nodes make a different choice for $\beta,$ $\Pi_{\byzantineagreement}$ allows them to decide on the same bit $b',$ hence they output the same rounding option.

We may now provide the formal code of our Timestamp Agreement protocol $\Pi_{\timestampagreement}.$ We define the constant $\varepsilon = 0.49,$ but any choice of $\varepsilon < 0.5$ suffices.

\begin{protocolbox}{$\Pi_{\timestampagreement}$}
	\algoHead{Code for node $\clock$ receiving a transaction at time $\timestamp_{\inputt}$}
	\begin{algorithmic}[1]
            \State Join $\Pi_{\init}$ with input $\timestamp_{\inputt}.$ Upon obtaining $\timestamp_{\mu}$ via $\Pi_{\init}$:
            \State \hspace{0.5cm} Join $\Pi_{\approxagreement}^{\varepsilon}$ with input $\timestamp_\mu.$ Upon obtaining output $\timestamp_{\approxagreement}$ in  $\Pi_{\approxagreement}^{\varepsilon}$:
		\State \hspace{1cm} Let $\alpha$ be an integer such that $\alpha \leq \timestamp_{\approxagreement} < \alpha + 1.$
            \State \hspace{1cm} If $\timestamp_{\approxagreement} - \alpha < \alpha + 1 - \timestamp_{\approxagreement},$ set $\beta = \alpha$ and $\beta' = \alpha + 1.$
            \State \hspace{1cm} Otherwise, set $\beta = \alpha + 1$ and $\beta' = \alpha.$
            \State \hspace{1cm} Set $b = 0$ if $\beta$ is even, and $b = 1$ if $\beta$ is odd.
            \State \hspace{1cm} Join $\Pi_{\byzantineagreement}$ with input $b.$ Upon obtaining output $b'$ via $\Pi_{\byzantineagreement}$:
            \State \hspace{1.5cm} If $b = b'$ , set $\timestamp_{\outputt} = \beta.$ Otherwise, set $\timestamp_{\outputt} = \beta'.$ Output $\timestamp_{\outputt}$ and terminate.
	\end{algorithmic}
\end{protocolbox}

We now focus on proving Theorem \ref{thm:timestamp-agreement}. In Lemma \ref{lemma:outputs-timestamp-aa}, we show that $\Pi_{\timestampagreement}$ indeed achieves Timestamp Agreement.
Then, Lemma \ref{lemma:TA-complexity} focuses on the round complexity. The requirement of synchrony assumptions holding only for $\Delta_{\ext} + \Delta_{\dcn}$ in order to achieve $\lceil f / 2 \rceil$-Median Validity is given by $\Pi_{\init},$ since the subsequent steps of $\Pi_{\timestampagreement}$ are fully asynchronous.

\begin{restatable}{lemma}{outputsTimestampAA}\label{lemma:outputs-timestamp-aa}
If less than $f + 1$ honest nodes hold inputs $\timestamp_{\inputt},$ then no honest node outputs.

Otherwise, honest nodes that output have obtained the same value $\timestamp_{\outputt}$ satisfying $\delta$-Median Validity, with $\delta = 
\lceil f / 2\rceil$ if the synchrony assumptions held for $\Delta_{\ext} + \Delta_{\dcn}$ time at the beginning of the protocol's execution, and $\delta = f$ otherwise.

In addition, if all honest nodes hold inputs $\timestamp_{\inputt},$ then all honest nodes output.
\end{restatable}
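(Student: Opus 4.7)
My plan is to prove the three claims of the lemma by chaining the guarantees of the three sub-protocols $\Pi_{\init}$, $\Pi_{\approxagreement}^{\varepsilon}$, and $\Pi_{\byzantineagreement}$ that $\Pi_{\timestampagreement}$ composes. The participation claims fall out quickly: if fewer than $f+1$ honest nodes hold inputs, Lemma~\ref{lemma:no-outputs} says no honest node outputs in $\Pi_{\init}$, so no honest node ever joins $\Pi_{\approxagreement}^{\varepsilon}$ or $\Pi_{\byzantineagreement}$, and therefore none outputs in $\Pi_{\timestampagreement}$. Conversely, if all honest nodes hold inputs, Lemma~\ref{lemma:no-outputs} gives every honest node a $\timestamp_\mu$, then Lemma~\ref{lemma:AA-median} gives every honest node a $\timestamp_\approxagreement$, and Termination of $\Pi_{\byzantineagreement}$ provides the bit $b'$ to every honest node, so all honest nodes terminate with some $\timestamp_\outputt$.

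Next I attack Agreement and Validity for honest nodes that do output. Any such node has received a value $\timestamp_\approxagreement$ from $\Pi_{\approxagreement}^{\varepsilon}$. By Lemma~\ref{lemma:AA-median} all such values pairwise differ by less than $\varepsilon < 0.5$ and lie in the range of the honest inputs $\timestamp_\mu$ to $\Pi_{\approxagreement}^{\varepsilon}$; combining this with Lemma~\ref{lemma:async-validity} (or Lemma~\ref{lemma:sync-validity} if the synchrony assumptions held for $\Delta_{\ext}+\Delta_{\dcn}$ time at the start) shows every honest $\timestamp_\approxagreement$ lies in $[T_{\mu-\delta}, T_{\mu+\delta}]$, with $\delta=\lceil f/2\rceil$ under sufficient synchrony and $\delta=f$ otherwise, where $T$ is the sorted array of honest inputs to $\Pi_{\timestampagreement}$. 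I then split into the three cases from the protocol description: in cases \ref{case:even}/\ref{case:odd}, since the honest $\timestamp_\approxagreement$ values lie in an interval of length $<0.5$ containing the integer $\alpha$, every honest node computes the same nearest integer $\beta=\alpha$; feeding the same bit $b$ into $\Pi_{\byzantineagreement}$, Weak Validity forces $b'=b$, so every honest node outputs $\alpha$. In case \ref{case:between}, the honest $\timestamp_\approxagreement$ values lie strictly between consecutive integers $\alpha$ and $\alpha+1$, so every honest node's two rounding options are $\{\alpha,\alpha+1\}$, which have opposite parities; the common bit $b'$ returned by $\Pi_{\byzantineagreement}$ picks out a unique integer, and every honest node outputs that one. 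This gives Agreement.

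For Validity I must check that the common $\timestamp_\outputt$ lies in $[T_{\mu-\delta}, T_{\mu+\delta}]$. In cases \ref{case:even}/\ref{case:odd}, the output $\alpha$ itself lies in the honest range of $\timestamp_\approxagreement$ and hence in the validity interval. The subtle part is case \ref{case:between}, where one of the two outputs $\alpha$ or $\alpha+1$ may fall outside the honest range of $\timestamp_\approxagreement$. Here I exploit crucially that inputs are integers, so $T_{\mu-\delta}$ and $T_{\mu+\delta}$ are themselves integers. From $T_{\mu-\delta}\le \timestamp_\approxagreement<\alpha+1$ and integrality I obtain $T_{\mu-\delta}\le\alpha$; symmetrically from $\alpha<\timestamp_\approxagreement\le T_{\mu+\delta}$ I obtain $T_{\mu+\delta}\ge\alpha+1$. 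Hence both candidate rounding options lie in $[T_{\mu-\delta}, T_{\mu+\delta}]$, and whichever one $\Pi_{\byzantineagreement}$ selects satisfies $\delta$-Median Validity.

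The main obstacle is precisely this case \ref{case:between} validity step: I must justify why the rounding option that lies \emph{outside} the honest range of $\timestamp_\approxagreement$ is still acceptable. Everything else is fairly mechanical composition of the three sub-protocols' guarantees; the integrality of inputs (and hence of the validity interval endpoints) is what makes the one-unit rounding slack safe, and this is the pivot on which the whole construction rests.
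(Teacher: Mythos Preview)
Your proof is correct and follows essentially the same approach as the paper: chain the guarantees of $\Pi_{\init}$, $\Pi_{\approxagreement}^{\varepsilon}$, and $\Pi_{\byzantineagreement}$, and use the integrality of the honest inputs (hence of $T_{\mu\pm\delta}$) to absorb the one-unit rounding slack. The only cosmetic difference is the case split for the rounding step---the paper partitions according to whether all honest $\timestamp_{\approxagreement}$ lie in a single half-open interval $[\gamma,\gamma+1)$ or straddle an integer, whereas you partition according to whether the closed range of honest $\timestamp_{\approxagreement}$ contains an integer---but the two decompositions are equivalent and lead to the same use of Weak Validity and integrality.
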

\begin{proof}
    Lemma \ref{lemma:no-outputs} ensures that $f + 1$ honest nodes holding inputs $\timestamp_{\inputt}$ are necessary in order to obtain outputs. In the following, we assume this was the case.
    
    Lemma \ref{lemma:AA-median} ensures that honest nodes obtaining outputs (meaning all honest nodes if all of them had inputs $\timestamp_{\inputt}$) have obtained $\varepsilon$-close approximations $\timestamp_{\approxagreement}$ for $\varepsilon < 0.5.$ These approximations are within the range of honest values $\timestamp_{\inputt},$ and hence satisfy $\delta$-Median Validity, as ensured by Lemma \ref{lemma:async-validity} and Lemma \ref{lemma:sync-validity}.
    
    Then, we need to consider two cases: when all obtained honest approximations are between two consecutive integers, and when some honest approximations are lower than an integer, while some are higher.

    If there is some integer $\gamma$ such that $\gamma \leq \timestamp_{\approxagreement} < \gamma + 1$ for all obtained honest approximations $\timestamp_{\approxagreement},$ then  honest nodes obtain $\alpha = \gamma$ and $\alpha + 1 = \gamma + 1.$ Regardless of the chosen $\beta$ and bit $b,$ honest nodes obtain in $\Pi_{\byzantineagreement}$ the same bit $b'$ which refers to the same value: either $\gamma$ for all honest nodes that reached this stage, or $\gamma + 1$ for all honest nodes that reached this stage. Hence, these honest nodes output the same timestamp.
    It remains to show that the output timestamp is in the range of honest inputs. If all honest nodes that reached this stage have obtained $\timestamp_{\approxagreement} = \gamma,$ then they joined $\Pi_{\byzantineagreement}$ with the same input $b$ representing $\gamma$'s parity, and hence they output the same $\gamma$ in the honest range according to Lemma \ref{lemma:AA-median}.
    Otherwise, if at least one honest node has obtained $\gamma < \timestamp_{\approxagreement} < \gamma + 1,$ we take into account that the honest inputs are integers. Lemma \ref{lemma:AA-median} then implies that both $\gamma$ and $\gamma + 1$ are in the honest inputs' range.

    Otherwise, there is some integer $\gamma$ such that $\gamma \leq \timestamp_{\approxagreement} < \gamma + 1$ for some honest approximation $\timestamp_{\approxagreement}$ and $\gamma + 1 \leq \timestamp_{\approxagreement}' < \gamma + 2$ for some honest approximation $\timestamp_{\approxagreement}.$ Note that, in this case, Lemma \ref{lemma:AA-median} ensures that $\gamma + 1$ is in the range of the honest nodes' inputs.     
    In addition, since Lemma \ref{lemma:AA-median} ensures $\timestamp_{\approxagreement}' -  \timestamp_{\approxagreement} < 0.5,$ both $\gamma + 1 - \timestamp_{\approxagreement} < 0.5$ and $\timestamp_{\approxagreement}' - (\gamma + 1) < 0.5$ hold. This applies to all honest nodes that reached this stage: namely, all these honest nodes choose the same $\beta = \gamma + 1$ and therefore join $\Pi_{\byzantineagreement}$ with the same bit $b.$ Then, $\Pi_{\byzantineagreement}$ ensures all honest nodes that reached this stage output $b' = b$ and output $\gamma + 1.$

    If all honest nodes had inputs $\timestamp_{\inputt},$ all honest nodes have obtained outputs in $\Pi_{\byzantineagreement},$ and therefore all honest nodes output in $\Pi_{\timestampagreement}.$
\end{proof}

The round complexity of $\Pi_{\timestampagreement}$ follows from the fact that $\Pi_{\approxagreement}$ ensures termination within $\mathcal{O}(\log(\timestamp_{\max} - \timestamp_{\min}))$ rounds, if honest nodes' inputs are between $\timestamp_{\min}$ and $\timestamp_{\max},$ while $\Pi_{\byzantineagreement}$ ensures termination within expected constant time.

\begin{lemma} \label{lemma:TA-complexity}
    If all honest nodes hold inputs $\timestamp_{\inputt},$ then honest nodes output within expected  $\mathcal{O}(\log(\timestamp_{\max} - \timestamp_{\min}))$ rounds, where $\timestamp_{\min}$  and $\timestamp_{\max}$ denote the lowest and the highest honest inputs respectively (hence $\mathcal{O}(\log \Delta_{\ext})$ rounds if the synchrony assumptions are satisfied).
\end{lemma}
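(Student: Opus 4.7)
The plan is to decompose the execution of $\Pi_{\timestampagreement}$ into its three sub-protocols and to bound the round complexity of each separately. Concretely, from the code of $\Pi_{\timestampagreement}$, an honest node first runs $\Pi_{\init}$, then feeds its output into $\Pi_{\approxagreement}^{\varepsilon}$, and finally into $\Pi_{\byzantineagreement}$; the only work after $\Pi_{\byzantineagreement}$ terminates is local arithmetic. Thus the overall round complexity is at most the sum of the round complexities of these three components.

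First I would handle $\Pi_{\init}$, which is a single send-then-wait step and therefore contributes only $\mathcal{O}(1)$ rounds beyond the external waiting time that is already accounted for by $\Delta_{\ext}$. Next I would invoke the complexity bound already stated for Approximate Agreement right after Lemma \ref{lemma:AA-median}: if honest inputs to $\Pi_{\approxagreement}^{\varepsilon}$ span a range of size at most $\Delta_{\mu}$, then $\Pi_{\approxagreement}^{\varepsilon}$ terminates within $\mathcal{O}(\log(\Delta_{\mu}/\varepsilon))$ rounds, and since $\varepsilon = 0.49$ is constant this is $\mathcal{O}(\log \Delta_\mu)$. Finally, the randomized binary BA protocol $\Pi_{\byzantineagreement}$ of \cite{mostefaoui2015signature} terminates in expected $\mathcal{O}(1)$ rounds, contributing only a constant.

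The key technical step is bounding $\Delta_{\mu}$ in terms of $\timestamp_{\max} - \timestamp_{\min}$. For this I would appeal to Lemma \ref{lemma:async-validity}: every honest output $\timestamp_{\mu}$ of $\Pi_{\init}$ lies in the range of honest inputs, i.e., in $[\timestamp_{\min}, \timestamp_{\max}]$. Hence the spread of honest inputs to $\Pi_{\approxagreement}^{\varepsilon}$ satisfies $\Delta_{\mu} \leq \timestamp_{\max} - \timestamp_{\min}$, and the bound on $\Pi_{\approxagreement}^{\varepsilon}$ becomes $\mathcal{O}(\log(\timestamp_{\max}-\timestamp_{\min}))$. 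Summing the three contributions yields the desired expected round complexity $\mathcal{O}(\log(\timestamp_{\max} - \timestamp_{\min}))$.

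For the parenthetical claim about the synchronous regime, I would observe that if the synchrony assumptions hold and the honest user sends its shares at some real time $\tau_0$, then every honest node receives its share within the interval $[\tau_0, \tau_0 + \Delta_{\ext}]$, so $\timestamp_{\max} - \timestamp_{\min} \leq \Delta_{\ext}$, which gives the stated $\mathcal{O}(\log \Delta_{\ext})$ bound. I do not expect a serious obstacle here; the only thing to be slightly careful about is that the range bound used for $\Pi_{\approxagreement}^{\varepsilon}$ must be a bound on the inputs actually fed into it (the $\timestamp_\mu$'s), not on the original inputs $\timestamp_{\inputt}$, and this is precisely what Honest-Range containment in Lemma \ref{lemma:async-validity} provides.
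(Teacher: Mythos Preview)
Your proposal is correct and follows the same approach as the paper, which justifies the lemma in a single sentence just before its statement by noting that $\Pi_{\approxagreement}$ terminates in $\mathcal{O}(\log(\timestamp_{\max}-\timestamp_{\min}))$ rounds and $\Pi_{\byzantineagreement}$ in expected $\mathcal{O}(1)$ rounds. Your write-up is in fact more careful than the paper's: you make explicit that the inputs to $\Pi_{\approxagreement}$ are the $\timestamp_{\mu}$'s (not the original $\timestamp_{\inputt}$'s) and invoke Lemma~\ref{lemma:async-validity} to bound their spread by $\timestamp_{\max}-\timestamp_{\min}$, whereas the paper only hints at this via the parenthetical remark after Lemma~\ref{lemma:AA-median}.
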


\section{Analysis of the Main Protocol}\label{section:main-protocol-proofs}

We now formally prove the properties of the transaction submission protocol. In particular, we prove Theorem \ref{theorem:main-protocol}.

\begin{lemma}[Honest-User Liveness]
    If a transaction $\tx$ is sent by an honest user, it gets processed and submitted to the mempool eventually, and, if the user's messages reach the nodes within $\Delta_{\ext}$ time and the synchrony assumptions hold inside the DCN for an additional $\Delta_{\dcn}$ time, the transaction get submitted within expected $\mathcal{O}(\log \Delta_{\ext})$ communication rounds.
\end{lemma}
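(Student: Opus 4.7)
The plan is to walk through steps \ref{step_1}--\ref{step_5} of the submission protocol, checking that honest-user behaviour feeds each step with the inputs required by the next, and then to read off the round complexity from Theorem~\ref{thm:timestamp-agreement}. First I would observe that in step~\ref{step_1} an honest user correctly produces $(h, \tx_\clock, \tsig_\clock)$ for every node $\clock$, using the specified secret-sharing, signing, and hash schemes. Eventually each honest node $\clock$ receives its message, records some timestamp $\timestamp_\clock$, and joins $\Pi_{\timestampagreement}$ with input $\timestamp_\clock$. By Lemma~\ref{lemma:outputs-timestamp-aa}, since all $n-f$ honest nodes provide inputs, every honest node terminates step~\ref{step_2} with the same timestamp $\timestamp$.

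In step~\ref{step_3}, each of the at least $n-f \ge 2f+1$ honest nodes signs $(h, \timestamp)$; eventual delivery guarantees that every honest node collects at least $f+1$ valid partial signatures and combines them into a valid threshold signature $\sigma_\clock$. In step~\ref{step_4}, each honest node broadcasts its signed share $(\tx_\clock, \tsig_\clock)$. Since the user is honest, the shares distributed in step~\ref{step_1} are consistent with the original $(\tx, \nonce)$, so any $f+1$ of the at least $n-f$ honest shares received suffice to reconstruct $(\tx, \nonce)$ with $\hash(\tx, \nonce)=h$, and no honest node aborts. Step~\ref{step_5} then consists of every honest node broadcasting $(\tx, \timestamp, \sigma_\clock)$ to the P2P network, which proves eventual submission.

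For the quantitative bound, suppose the user's messages reach all nodes within $\Delta_{\ext}$ and that synchrony holds inside the DCN for an additional $\Delta_{\dcn}$. Then the honest inputs to $\Pi_{\timestampagreement}$ all fall within an interval of length at most $\Delta_{\ext}$, so $\timestamp_{\max}-\timestamp_{\min}\le \Delta_{\ext}$. Lemma~\ref{lemma:TA-complexity} then gives expected round complexity $\mathcal{O}(\log\Delta_{\ext})$ for step~\ref{step_2}. Steps~\ref{step_3}--\ref{step_5} each consist of a constant number of all-to-all exchanges and contribute only $\mathcal{O}(1)$ additional rounds, yielding the claimed $\mathcal{O}(\log\Delta_{\ext})$ bound.

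The main obstacle, which is largely a matter of bookkeeping, is checking that no downstream step can silently fail for an honest user once the initial $\Delta_{\ext}+\Delta_{\dcn}$ synchrony window has elapsed. This is handled by the fact that steps~\ref{step_3} and~\ref{step_4} are fully asynchronous: the thresholds of $f+1$ honest partial signatures and of $f+1$ consistent shares are both guaranteed by the $n-f\ge 2f+1$ honest participants, independently of any synchrony assumption, and Theorem~\ref{thm:timestamp-agreement} already packages the requirement that only the initial $\Delta_{\ext}+\Delta_{\dcn}$ interval be synchronous. Thus no separate synchrony argument is needed after step~\ref{step_2}, and the two conclusions of the lemma follow.
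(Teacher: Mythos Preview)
Your proposal is correct and follows essentially the same approach as the paper: show that an honest user provides every honest node with the inputs needed to join $\Pi_{\timestampagreement}$, invoke the termination guarantee of Timestamp Agreement, then argue that the signature and share-reconstruction thresholds are met so the transaction reaches the mempool, with the round bound coming from Lemma~\ref{lemma:TA-complexity} plus a constant overhead for the remaining steps. Your version is more explicit in verifying each of steps~\ref{step_3}--\ref{step_5} and in noting that these steps require no synchrony, but this is elaboration rather than a different argument.
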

\begin{proof}
    Since $\tx$ was sent by an honest user, all honest nodes receive the necessary messages to join $\Pi_{\timestampagreement},$ and hence they obtain a timestamp $\tau.$ Then all honest nodes obtain $\tau$ and send their shares to the other nodes. Since $f + 1$ shares are necessary to reconstruct $\tx$ and the shares are signed by the user (therefore the corrupted nodes cannot send corrupted shares), the honest nodes are able to reconstruct the transaction and submit it to the mempool.
    The round complexity follows from Lemma \ref{lemma:TA-complexity}, and from the fact that the main protocol only adds a constant number of communication rounds over $\Pi_{\timestampagreement}.$
\end{proof}

\begin{lemma}[Integrity]
    If a transaction $\tx$ gets submitted to the mempool, the process was initiated by some user.
\end{lemma}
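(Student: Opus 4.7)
The plan is to trace backwards from the mempool submission to show it cannot have occurred without some user executing step (a) of the protocol. The cryptographic lever will be the unforgeability of the user's signature scheme: the shares $\tx_\clock$ are accompanied by signatures $\tsig_\clock$ produced under the user's private key, and without access to that key no party (including the coalition of $f$ byzantine nodes) can manufacture a valid signature on a fresh share.

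First I would fix an arbitrary transaction $\tx$ that has been submitted to the mempool, meaning that some node $v$ has broadcast a tuple $(\tx, \timestamp, \sigma_v)$ in step (e). By inspection of the protocol, $v$ only reaches step (e) after successfully completing step (d): $v$ must have collected at least $f+1$ pairs $(\tx_\clock, \tsig_\clock)$ from distinct nodes, verified each $\tsig_\clock$ against the user's public key, reconstructed a pair $(\tx', \nonce)$ by interpolation in the $(f+1, n)$-Shamir scheme, and checked $\hash(\tx', \nonce) = h.$ Since $\tx$ is precisely the reconstructed transaction that passed the hash test, the $f+1$ verified shares $\tx_\clock$ are exactly shares of $\tx.$

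Next I would invoke signature unforgeability: each of the $f+1$ verified signatures $\tsig_\clock$ binds the share $\tx_\clock$ to the user's public key, so under the EUF-CMA assumption every such signature must have been produced by the holder of the corresponding private key, namely the user. In particular, the user must have generated and signed the shares of $\tx$ at some earlier point, which is exactly step (a) of the transaction submission protocol. This establishes that the process was initiated by some user.

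The main subtlety to address is that "some user" is not required to be honest: a byzantine user may collude with byzantine nodes and distribute inconsistent shares, but this does not endanger integrity since the hash check against $h$ together with the per-share binding of user signatures ensures that whatever $\tx$ ends up in the mempool is a pre-image of some $h$ that a user chose to sign shares for. A secondary obstacle is ruling out the scenario in which the $f$ byzantine nodes together forge the missing $(f{+}1)$-th signed share; this is handled directly by unforgeability, since even a colluding coalition of corrupted nodes does not possess the user's signing key.
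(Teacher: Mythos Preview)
Your argument has a genuine gap: it silently assumes that the node $v$ that broadcast $(\tx,\timestamp,\sigma_v)$ in step~(e) is honest. You write ``by inspection of the protocol, $v$ only reaches step~(e) after successfully completing step~(d),'' but a byzantine $v$ is under no such obligation---it may push an arbitrary tuple $(\tx,\timestamp,\sigma)$ into the mempool without ever touching step~(d), without holding any user-signed shares, and without reconstructing anything. The user signatures $\tsig_\clock$ you lean on are exchanged only node-to-node in step~(d) and are never inspected by validators, so they give you no purchase on what a byzantine submitter can place in the mempool.

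The paper closes exactly this gap by pivoting to the one object validators \emph{do} check: the threshold signature $\sigma_v$ on $(h,\timestamp)$. A valid $\sigma_v$ certifies that $f+1$ distinct nodes partially signed $(h,\timestamp)$, hence at least one honest node did. That honest node signs in step~(c) only after obtaining an output from $\Pi_{\timestampagreement}$, and by the guarantees of $\Pi_{\timestampagreement}$ this forces that honest nodes joined the protocol with inputs, i.e., received the user's message $(h,\tx_\clock,\tsig_\clock)$ in step~(b). That is where ``initiated by some user'' comes from. Your share-signature route could be salvaged, but only after first invoking the threshold signature to land on an honest node; as written, the backward trace through step~(d) is unjustified.
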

\begin{proof}
    Submitting a transaction to the mempool requires signatures from $f + 1$ nodes, hence from at least one  honest node. This honest node only signs if it has obtained output in the invocation of $\Pi_\timestampagreement$ corresponding to the hash of $\tx.$ This means that honest nodes have joined this execution of $\Pi_{\timestampagreement},$ hence they have received input from some user.
\end{proof}

\begin{lemma}[Unique Timestamp]
    If a transaction $\tx$ gets submitted to the mempool with timestamps $\timestamp$ and $\timestamp',$ then $\timestamp = \timestamp'.$
\end{lemma}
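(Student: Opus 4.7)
The plan is to combine the unforgeability of the threshold signature scheme with the Agreement property of $\Pi_{\timestampagreement}$. Since the hash $h$ computed in step \ref{step_1} uniquely identifies which invocation of $\Pi_{\timestampagreement}$ is being carried out for a given transaction $\tx$, the question reduces to showing that no two distinct timestamps can both be certified for the same hash.

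First, I would invoke step \ref{step_3} of the main protocol: for $\tx$ to be submitted to the mempool with timestamp $\timestamp$, the accompanying threshold signature $\sigma_\clock$ must be a valid $(f+1,n)$-threshold signature on the pair $(h,\timestamp)$. By the robustness and non-forgeability of the BLS threshold scheme, this signature can only have been produced by combining partial signatures from at least $f+1$ distinct nodes, each of which signed $(h,\timestamp)$. Applying the same observation to $\timestamp'$ yields another set of $f+1$ nodes signing $(h,\timestamp')$. Since at most $f$ nodes are byzantine, each of these two sets contains at least one honest node; call them $\clock_1$ and $\clock_2$, respectively.

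Next, I would use the protocol's specification that an honest node only produces a partial signature on a pair $(h,\timestamp)$ in step \ref{step_3} after having obtained $\timestamp$ as its output in the execution of $\Pi_{\timestampagreement}$ tagged by $h$. Hence $\clock_1$ output $\timestamp$ in the instance of $\Pi_{\timestampagreement}$ for hash $h$, and $\clock_2$ output $\timestamp'$ in the same instance. By the Agreement guarantee of Timestamp Agreement, established for $\Pi_{\timestampagreement}$ in Lemma~\ref{lemma:outputs-timestamp-aa}, any two honest nodes that output in the same instance output the same value, so $\timestamp = \timestamp'$.

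The argument is quite short, and the only subtlety is making explicit that $h$ serves as a session identifier pinning down a single execution of $\Pi_{\timestampagreement}$, so that Agreement applies across $\clock_1$ and $\clock_2$; this is what prevents an adversary from recycling partial signatures from one execution into another. The unforgeability assumption on the threshold scheme takes care of the remaining concern that the adversary could manufacture signatures without any honest participant's involvement.
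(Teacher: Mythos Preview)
Your proof is correct and follows essentially the same approach as the paper's: both arguments use the $(f+1,n)$-threshold signature to force at least one honest signer behind each certified timestamp, then invoke the Agreement property of $\Pi_{\timestampagreement}$ (within the invocation identified by $h$, respectively the nonce) to conclude $\timestamp = \timestamp'$. Your presentation is arguably a bit more explicit about $h$ acting as a session identifier, whereas the paper phrases the same point as a contradiction via the nonce; the underlying reasoning is the same.
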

\begin{proof}
    Assume that $\timestamp \neq \timestamp'.$ First, note that timestamps are obtained via $\Pi_\timestampagreement,$ which assigns $\tx$ a unique timestamp by the Agreement property. Therefore, during an invocation of the main protocol for $\tx,$ all honest nodes obtain the same timestamp $\tau.$ Since $f + 1$ signatures are required for the transaction to be submitted along with its timestamp, the corrupted parties are unable to submit $\tx$ to the mempool on their own. Hence, if $\timestamp \neq \timestamp',$ there must be an honest party that has signed $\timestamp',$ which happened through a different invocation of the main protocol, hence for a different transaction (ensured by the transaction's nonce).
\end{proof}

\begin{lemma}[Fair Timestamp]
    If a transaction $\tx$ gets submitted to the mempool with timestamp $\timestamp,$ then $\timestamp$ is a \emph{fair} timestamp.
\end{lemma}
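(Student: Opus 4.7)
The plan is to reduce the claim directly to the guarantees of $\Pi_{\timestampagreement}$ established in Lemma~\ref{lemma:outputs-timestamp-aa}, using the threshold signature as a bridge between "submitted to mempool" and "some honest node agreed on this timestamp."

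First I would unpack what it means for $(\tx, \timestamp, \sigma)$ to appear in the mempool. By step~\ref{step_5} of the main protocol and the validators' checks, the tuple must carry a valid $(f+1, n)$-threshold signature $\sigma$ on the pair $(h, \timestamp),$ where $h = \hash(\tx, \nonce).$ By the non-forgeability of the BLS threshold scheme, this signature can only be assembled from $f+1$ partial signatures from distinct nodes, so at least one honest node $\clock$ must have produced a partial signature on $(h, \timestamp).$ Inspecting step~\ref{step_3}, an honest node signs $(h, \timestamp)$ only after completing the corresponding invocation of $\Pi_{\timestampagreement}$ for the hash $h$ with output $\timestamp.$

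Next I would invoke Lemma~\ref{lemma:outputs-timestamp-aa}. Since an honest node has obtained output $\timestamp$ in $\Pi_{\timestampagreement},$ by the lemma there were at least $f+1$ honest nodes holding inputs (otherwise no honest node could output), and $\timestamp$ satisfies $f$-Median Validity with respect to the honest nodes' receipt timestamps $\timestamp_{\clock}$ of $(h, \tx_{\clock}, \tsig_{\clock}).$ Moreover, the same lemma gives the stronger $\lceil f/2 \rceil$-Median Validity guarantee whenever the synchrony assumptions held for $\Delta_{\ext} + \Delta_{\dcn}$ time at the beginning of that invocation, as guaranteed by the analysis of $\Pi_{\init}$ (Lemma~\ref{lemma:sync-validity}) while the remaining steps of $\Pi_{\timestampagreement}$ preserve validity by Lemma~\ref{lemma:AA-median} and the case analysis in the proof of Lemma~\ref{lemma:outputs-timestamp-aa}. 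By the definition of \emph{fair timestamp} given at the end of Section~\ref{section:fairness-stuff}, this is exactly the required property.

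The only subtlety (and the sole thing that feels like a real step rather than bookkeeping) is to make sure that the timestamp bound in the threshold signature really corresponds to the timestamps $\timestamp_{\clock}$ associated to $\tx,$ as opposed to some unrelated invocation. This is ruled out by the nonce: the signed pair is $(h, \timestamp)$ with $h = \hash(\tx, \nonce),$ so by collision resistance of $\hash$ the honest signer's invocation of $\Pi_{\timestampagreement}$ was triggered by a message carrying exactly this $h,$ which an honest user generates only once. Hence the receipt timestamps entering the Median Validity statement are precisely those for $\tx,$ and the lemma follows.
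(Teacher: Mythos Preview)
Your proof is correct and follows essentially the same approach as the paper's: both argue that the $(f+1,n)$-threshold signature on $(h,\timestamp)$ forces at least one honest node to have signed it, and an honest node only signs after $\Pi_{\timestampagreement}$ outputs $\timestamp$, which is fair by Lemma~\ref{lemma:outputs-timestamp-aa}. Your write-up is more explicit about non-forgeability and the hash/nonce binding, but the underlying argument is identical to the paper's (terser) proof.
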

\begin{proof}
    Since $\tx$ was assigned a timestamp obtained via $\Pi_\timestampagreement,$ all honest parties have assigned a fair timestamp $\tau$ to $\tx,$ i.e., satisfying $f$-Median Validity or $\lceil f/2 \rceil$-Median Validity, depending on the network conditions and on the user's honesty. Then, since $f + 1$ signatures are required for $\tx$ to be submitted, its timestamp was signed by an honest party, hence it is fair.
\end{proof}

\section{Discussion}
\textbf{Front-running Resistance.}
The DCN effectively prevents tolerant front-running, i.e., the attacker's transaction executing before the victim's transaction. Transactions are ordered according to the timestamps returned by the DCN, which fulfill $\delta$-Median Validity. Still, transactions submitted close to each other in time could receive the same timestamp, in which case the validator picks an order, or receives timestamps in the opposite order of the actual submission times. However, this is not an issue, as the transaction contents are hidden until the timestamp is agreed upon by the nodes. Thus, tolerant front-running, which, to be effective, requires the attacker to know the contents of the victim's transaction, is prevented. 

The DCN does not address destructive front-running. To be more easily integrateable in the current blockchain infrastructure, the permissioned DCN only supplies the timestamp but does not interfere with the blockchain's consensus. Destructive front-running, thus, remains possible, as the block proposer (miner) could choose not to include the transaction.  

\textbf{Censorship Resistance.} We note that by using timestamps as a decision factor when including transactions in a block, transactions become in some sense \emph{block-bound}. Thus, a transaction can become temporarily censored if the block proposer does not include the transaction and the transaction's timestamp is too low to be included in future blocks. Further,
under an asynchronous network, transactions may get lost solely due to messages getting delayed and the corresponding timestamp becoming obsolete by the time messages reach the validators. In a real-world implementation of our system, this issue can be circumvented by allowing users to resubmit their transactions with new nonces if they get lost. Importantly, the DCN does not decrease the censorship resilience any further than the block-bound model does in comparison to the classical non-block-bound blockchain model, while having the advantage of providing guarantees against front-running, which neither the classical nor the block-bound model can achieve alone. This follows both under synchrony and asynchrony by the Honest-User Liveness property.

\textbf{Permissioned Network.} The DCN is designed as a permissioned network consisting of specialized parties offering efficient and reliable transaction timestamping. Importantly, the DCN only supplies transactions with timestamps and is therefore designed to be used together with an existing permissionless blockchain. In particular, the responsibility of adding blocks to the ledger, validating blocks and storing the blockchain itself remains in the hand of the permissionless set of miners or validators. Essentially, the permissioned nature of the DCN does not reduce the robustness and decentralization of the network of validators that verify the blocks, i.e., proves that they are honest. The permissionless network, thus, retains control of the most fundamental task.

Similar permissioned setups are already common in practice today. For instance, Chainlink oracles bringing price data from the real world onto the blockchain usually operate in a similar fashion~\cite{defillamaoracles}. Moreover, since Ethereum's transition from Proof-of-Work to Proof-of-Stake, block building has become more concentrated~\cite{yang2022sok,wahrstatter2023time,heimbach2023ethereum}, in that currently more than 90\% of the blocks are built with \textit{proposer-builder separation (PBS)}~\cite{pbs2023}. In the first six months since the merge, a mere 133 builders have built these PBS blocks that were included on the ledger~\cite{heimbach2023ethereum}. With PBS, block building is no longer done by the validators themselves but is instead handled by highly sophisticated block builders~\cite{pbs2023}, similarly in spirit to how the DCN is used for timestamping transactions. By shifting tasks requiring a high degree of complexity away from validators, such as building blocks, or timestamping transactions in the case of the DCN, the requirements to run a validator node decrease. Consequently, in the long run, the number of validators is expected to increase, leading to a higher overall degree of decentralization of the consensus layer~\cite{posvspow,pbs2023}, i.e., the core of the blockchain. In our case, parties participating in the DCN are now responsible for the non-trivial task of ordering transactions, while the complexity for the validators decreases. In particular, the task of block building becomes easier as validators must simply order transactions according to their timestamp.

Finally, we note that PBS and the DCN are incompatible. While the former optimizes for block value and thereby likely includes front-running transactions, the latter is designed to achieve a fair ordering that prevents front-running. If the DCN were integrated into a permissionless blockchain instead of PBS, the blockchain would protect users from front-running as opposed to maximizing block value on their behalf.

\section{Conclusion and Future Work}
We introduced the DCN, a novel and practical solution for fair transaction ordering in permissionless blockchains. Our approach differs from previous works by treating fair ordering as a Byzantine Agreement problem rather than a Byzantine State Machine Replication problem, leading to a simpler and faster algorithm while achieving good fairness guarantees. In particular, our new timestamp agreement protocol achieves $\ceil*{f / 2}$-Median Fairness when the network is synchronous and falls back to a guarantee of $f$-Median Fairness during periods of asynchrony. These two bounds are the best that can be obtained in terms of $\delta$-Median Fairness for the synchronous and asynchronous cases, respectively, as we have shown. The asynchronous fallback paradigm is a relatively unexplored, yet more robust notion than partial synchrony, so we find it natural to use it in designing other blockchain network protocols under realistic conditions. 

As a next step, it would be valuable to consider the implementation of a dynamic set of nodes in the DCN, supporting the addition and removal of nodes in a controlled manner and the updating of related information. To do so, it will also be important to provide incentives for the nodes. One possible way to do so is to use rewards coming from transaction fees, similar to gas fees in other blockchain systems. Additionally, it would be of interest to develop a prototype of our proposed method and evaluate its performance on-chain. Finally, the DeFi scene would benefit from the development of approaches for combating destructive front-running, which our work does not address.

\bibliography{references}

\end{document}